\newtheorem{theorem}{Theorem}[section]
\newtheorem{lemma}[theorem]{Lemma}
\newtheorem{conj}[theorem]{Conjecture}
\newtheorem{claim}[theorem]{Claim}
\newcommand{\eqdef}{:=}
\newcommand{\eps}{\varepsilon}
\DeclareMathOperator{\FSD}{FSD}
\DeclareMathOperator{\reach}{reach}
\newcommand\frechet{Fr\'echet\xspace}
\newcommand\etal{\emph{et~al.}}
\newcommand\N{\mathbb{N}}
\def\bd{{\partial}}
\newcommand\R{{\mathbb R}}
\begin{document}

\title{Four Soviets Walk the Dog---Improved Bounds
for Computing the \frechet Distance\footnote{A preliminary version 
appeared as
K.\@ Buchin, M.\@ Buchin, W.\@ Meulemans, and W.\@ Mulzer.
\emph{Four Soviets Walk the Dog---with an Application to Alt's Conjecture}.
Proc.~25th SODA, pp.~1399--1413, 2014.
K. Buchin and M. Buchin in part supported by 
COST (European Cooperation in Science and Technology) 
ICT Action IC0903 MOVE.
M. Buchin supported in part by
the Netherlands Organisation for Scientific Research
(NWO) under project no. 612.001.106.
W. Meulemans supported by the Netherlands
Organisation for Scientific Research (NWO) under project no.~639.022.707.
W. Mulzer supported in part by DFG projects MU/3501/1 and MU/3501/2.
}
}
\author{
Kevin Buchin\thanks{Dept.~of Mathematics and Computer Science,
TU Eindhoven, The Netherlands,
\texttt{[k.a.buchin, w.meulemans]@tue.nl}.
}
\and
Maike Buchin\thanks{Fakult\"at f\"ur Mathemtaik,
Ruhr Universit\"at Bochum, Germany,
\texttt{Maike.Buchin@ruhr-uni-bochum.de}.
}
\and
Wouter Meulemans\footnotemark[2]
\and
Wolfgang Mulzer\thanks{
Institut f\"ur Informatik,
Freie Universit\"at Berlin, Germany, \texttt{mulzer@inf.fu-berlin.de}.
}
}

\date{}

\maketitle

\begin{abstract}
Given two polygonal curves in the 
plane, there are many ways to define 
a notion of similarity between them. 
One popular measure 
is the \frechet distance.  Since it 
was proposed by Alt and Godau in 1992, 
many variants and extensions have been 
studied. Nonetheless, even more than 
20 years later, the original $O(n^2 \log n)$ 
algorithm by Alt and Godau for computing 
the \frechet distance remains the 
state of the art (here, $n$ denotes 
the number of edges on each curve). 
This has led Helmut Alt to conjecture 
that the associated decision problem
is 3SUM-hard.

In recent work, Agarwal \etal~show how 
to break the quadratic barrier for the 
\emph{discrete} version of the \frechet 
distance, where one considers sequences 
of points instead of polygonal curves. 
Building on their work, we give a 
randomized algorithm to compute the 
\frechet distance between two polygonal 
curves in time $O(n^2 \sqrt{\log n}(\log\log n)^{3/2})$ 
on a pointer machine and in time $O(n^2(\log\log n)^2)$ 
on a word RAM. Furthermore, we show that 
there exists an algebraic decision tree 
for the decision problem of depth $O(n^{2-\eps})$, 
for some $\eps > 0$. We believe that this 
reveals an intriguing new aspect of this 
well-studied problem.
Finally, we show how 
to obtain the first subquadratic algorithm
for computing the weak \frechet distance on a
word RAM.
\end{abstract}

\section{Introduction}

Shape matching is a fundamental problem in 
computational geometry, computer vision, and 
image processing.  A simple version can be 
stated as follows: given a database $\mathcal{D}$ 
of shapes (or images) and a query shape $S$, 
find the shape in $\mathcal{D}$ that most 
resembles $S$. However, before we can solve 
this problem, we first need to address an 
issue: what does it mean for 
two shapes to be ``similar''?  In the 
mathematical literature, on can find many 
different notions of distance between two sets, 
a prominent example being the \emph{Hausdorff 
distance}. Informally, the Hausdorff distance 
is defined as the maximal distance between 
two elements when every element of one set is 
mapped to the closest element in the other.
It has the advantage of being simple to 
describe and easy to compute for discrete sets.
In the context of shape matching, however, the 
Hausdorff distance often turns out to be 
unsatisfactory: it does not take the continuity 
of the shapes into account.
There are well known examples where the distance 
fails to capture the similarity of shapes as 
perceived by human observers~\cite{Alt09}.

In order to address this issue, Alt and Godau 
introduced the \frechet distance into the 
computational geometry literature~\cite{Godau91a,AltGo95}.
They argued that the \frechet distance is 
better suited as a similarity measure, and 
they described an $O(n^2\log n)$ time algorithm 
to compute it on a real RAM or pointer 
machine.\footnote{For a brief overview of the 
different computational models in this paper, 
refer to Appendix~\ref{app:compmodels}.}
Since Alt and Godau's seminal paper, there has 
been a wealth of research in various directions, 
such as extensions to higher dimensions~\cite{AltBuchin10,
Buchin07,BuchinBuSc10,BuchinBW08,CookDHSW11,Godau98}, 
approximation algorithms~\cite{AltKW03,AronovHKWW06,DriemelHW12}, 
the geodesic and the homotopic \frechet 
distance~\cite{ChambersVELLT10,CookWenk10,EfratGHMM02,HarPeledNSS12}, 
and much more~\cite{AgarwalHMS05,BergCG13,BuchinBMS12,BuchinBW09,
DriemelHarpeled13,Indyk02,MaheshwariSSZ11,MaheshwariSSZ11b}.
Most known approximation algorithms make further 
assumptions on the curves, and only an $O(n^2)$-time 
approximation algorithm is known for arbitrary 
polygonal curves~\cite{bblmm-fdrl-16}.
The \frechet distance and its variants, such as 
\emph{dynamic time-warping}~\cite{BellmanKalaba59}, 
have found various applications, with recent work 
particularly focusing on geographic applications 
such as map-matching tracking data~\cite{BrakatsoulasPSW05,WenkSP06} 
and moving objects analysis~\cite{BuchinBG10,
BuchinBGLL11,GudmundssonWolle10}.

Despite the large amount of published research, 
the original algorithm by Alt and Godau has not 
been improved, and the quadratic barrier on 
the running time of the associated decision 
problem remains unbroken. If we cannot improve 
on a quadratic bound for a geometric problem 
despite many efforts, a possible culprit may 
be the underlying 3SUM-hardness~\cite{GajentaanOv95}.
This situation induced Helmut Alt to make the 
following conjecture.\footnote{Personal 
communication 2012, see also~\cite{Alt09}.}

\begin{conj}[\textsc{Alt's Conjecture}]
Let $P$, $Q$ be two polygonal curves in the
plane. Then it is \textup{3SUM}-hard to 
decide whether the \frechet distance
between $P$ and $Q$ is at most $1$.
\end{conj}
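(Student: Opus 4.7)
The plan is a reduction from a known \textup{3SUM}-hard problem to the decision version of the \frechet{} distance. I would not reduce directly from \textup{3SUM} on real numbers, since the numerical constraints of \textup{3SUM} (linear equalities) do not interface naturally with \frechet{}-style constraints (Euclidean distances, hence quadratic). Instead I would start from one of the geometric canonical \textup{3SUM}-hard problems such as \textsc{GeomBase} (given points on three horizontal lines, decide whether some transversal triple is collinear) or \textsc{3POL} (three collinear points among $n$ given points), following Gajentaan and Overmars. The goal is to produce, from such an instance of size $n$, two polygonal curves $P$ and $Q$ on $O(n\,\polylog n)$ vertices whose \frechet{} distance is at most $1$ if and only if the instance is a YES instance.

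The core of the construction would be a \emph{collinearity gadget}: a short pair of sub-curves $(P_i, Q_j)$ whose local free space diagram forces a monotone traversal to synchronise at a parameter value that is a prescribed affine function of the input coordinates encoded in $P_i$ and $Q_j$. Intuitively, each input point $p_k$ becomes a vertex whose position offsets a segment of $P$ (or of $Q$) by an amount depending on $p_k$; a feasible monotone path through the combined free space diagram must then equate three such offsets, certifying collinearity. Next I would design a \emph{selection scaffold}: an outer curve pair whose free space diagram looks like a tall grid in which any monotone path is forced to pick one triple of gadgets at a time (one gadget per input line in the \textsc{GeomBase} variant), so that globally the \frechet{} distance drops to $1$ exactly when some triple of choices simultaneously passes its collinearity test. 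The composition step is delicate because the \frechet{} distance is a global quantity: the gadgets must be ``guarded'' by long detours that contribute $>1$ to any non-intended matching, so that correctness reduces to local feasibility in each gadget.

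The step I expect to be the main obstacle is precisely this algebraic one. Free-space constraints are quadratic in the curves' coordinates, and a single free-space cell is bounded by arcs of ellipses, so forcing an \emph{exact} linear relation $a_i + a_j + a_k = 0$ (or a collinearity test) through \frechet{} reachability seems to require cancelling the quadratic terms by careful placement of vertices---essentially writing each collinearity as the intersection of $O(1)$ algebraic curves that can be realised as corners of the free space diagram. A secondary obstacle is enforcing that monotone paths cannot ``shortcut'' between gadgets: standard \frechet{} hardness reductions (for variants like the weak or partial \frechet{} distance) exploit freedom that the strict monotone parameterisation does not allow, so the scaffold must itself be proved tight.

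Finally, an honest reading of the paper's own contribution suggests the plan will run into a fundamental wall: the authors exhibit an algebraic decision tree of depth $O(n^{2-\eps})$ for exactly this decision problem, which would contradict any \textup{3SUM}-hardness proof operating in the algebraic decision-tree model (under the standard quadratic lower bound for \textup{3SUM}). So any successful execution of the above plan must necessarily produce a reduction that escapes the algebraic decision-tree model---for instance by exploiting integer encodings on a word RAM, or by routing through a non-uniform hardness assumption weaker than \textup{3SUM}-hardness in the algebraic model. I would therefore treat the construction above as a template and, if it fails at the gadget-compositing step for the reasons just outlined, interpret that failure as corroborating evidence against the conjecture, matching the $O(n^{2-\eps})$ decision-tree bound obtained in this paper.
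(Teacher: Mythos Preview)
The paper contains no proof of this statement: it is stated as a \emph{conjecture}, attributed to Helmut Alt, and left open. There is therefore nothing in the paper to compare your proposal against. On the contrary, the paper's main technical contribution in this direction is evidence \emph{against} the conjecture: the algebraic decision tree of depth $O(n^{2-\eps})$ (and the resulting corollary that if the \frechet{} decision problem is \textup{3SUM}-hard then \textup{3SUM} itself admits an $O(n^{2-\eps})$-depth algebraic decision tree). Your final paragraph already identifies this correctly.

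As for the reduction sketch itself, the obstacle you flag is real and, to date, unresolved. No one has produced a collinearity gadget of the kind you describe, and the known \textup{3SUM}-hardness reductions for \frechet{}-type problems apply only to variants (weak \frechet{}, \frechet{} under translation, discrete \frechet{} with shortcuts, etc.) precisely because the strict bimonotone reachability structure of the standard free-space diagram resists the ``selection scaffold'' step: monotonicity forces a path to commit to choices in a fixed order, which makes it hard to encode an existential over unordered triples without blowing up the curve length. Your instinct that a successful reduction would have to leave the algebraic decision-tree model is consistent with the paper's corollary; absent such a model-specific trick, the template you outline is not expected to close, and the paper treats that as the intended takeaway rather than a gap to be filled.
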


Here, $1$ can be considered as an arbitrary 
constant, which can be changed to any other 
bound by scaling the curves. So far, the 
best unconditional lower bound for the problem
is $\Omega (n \log n)$ steps in the algebraic 
computation tree model~\cite{BuchinBKRW07}.

Recently, Agarwal~\etal~\cite{AgarwalBAKaSh14}
showed how to achieve a subquadratic running 
time for the \emph{discrete} version of the
\frechet distance, running in 
$O\left(n^2 \, \frac{\log\log n}{\log n}\right)$ time.
Their approach relies on 
reusing  small parts of the solution. We 
follow a similar approach based on the
so-called \emph{Four-Russian-trick} which 
precomputes small recurring parts of the 
solution and uses table-lookup to speed up 
the whole computation.\footnote{It is well 
known that the four Russians are not actually 
Russian, so we refer to them as four Soviets 
in the title.}
The result by Agarwal~\etal~is stated in the 
word RAM model of computation. They ask whether
their result can be generalized to the case of 
the original (continuous) \frechet distance.

\paragraph{Our contribution}
We address the question by Agarwal~\etal~and show 
how to extend their approach to the \frechet 
distance between two polygonal curves. Our 
algorithm requires total expected time
$O(n^2 \sqrt{\log n} (\log\log n)^{3/2})$.
This is the first algorithm with a running 
time of $o(n^2 \log n)$ and constitutes the 
first improvement for the general case since 
the original paper by Alt and Godau~\cite{AltGo95}. 
To achieve this running time, we give the first 
subquadratic algorithm for the decision problem 
of the \frechet distance. We emphasize that 
these algorithms run on a real RAM/pointer machine 
and do not require any bit-manipulation tricks.
Therefore, our results are more in the line of 
Chan's recent subcubic-time algorithms for 
all-pairs-shortest paths~\cite{Chan08,Chan10} 
or recent subquadratic-time algorithms for 
min-plus convolution~\cite{BremnerChDeErHuIaLaTa12} 
than the subquadratic-time algorithms for 3SUM
due to Baran \etal~\cite{BaranDP08}.  
If we relax the model to allow constant time 
table-lookups, the running time can be improved 
to be almost quadratic, up to $O(\log\log n)$ 
factors.  As in Agarwal~\etal, our results are 
achieved by first giving a faster algorithm 
for the decision version, and then performing 
an appropriate search over the critical values 
to solve the optimization problem.

In addition, we show that \emph{non-uniformly}, 
the \frechet distance can be computed in subquadratic 
time. More precisely, we prove that the decision 
version of the problem can be solved by
an algebraic decision tree~\cite{AroraBa09} of 
depth $O(n^{2-\eps})$, for some fixed
$\eps > 0$. It is, however, not clear how 
to implement this decision tree in subquadratic
time, which hints at a discrepancy between the 
decision tree and the uniform complexity of the 
\frechet problem.

Finally, we consider the \emph{weak} \frechet distance,
where we are allowed to walk backwards along the curves.
In this case, our framework allows us to achieve
a subquadratic algorithm on the work RAM.
Refer to Table~\ref{tab:results} for a comprehensive
summary of our results.

\begin{table}
\centering
\begin{tabular}{l|l|l|l|l}
Problem & Model & Old & New & Comment \\
\hline
\hline
continuous (D) & PM & $O(n^2)$
& $O\left(n^2\, \frac{(\log\log n)^{3/2}}{\sqrt{\log n}}\right)$\\
\hline
continuous (D) & WRAM & $O(n^2)$
& $O\left(n^2\, \frac{(\log\log n)^{2}}{\log n}\right)$\\
\hline
continuous (D)  & DT & 
$O(n^2)$&
$O\left(n^{2 - \eps}\right)$&
for $0 < \eps < 1/6$\\
\hline
continuous (C)  & PM & $O(n^2\log n)$
& $O\left(n^2\sqrt{\log n}(\log\log n)^{3/2}\right)$\\
\hline
continuous (C)  & WRAM & $O(n^2\log n)$
& $O\left(n^2(\log\log n)^{2}\right)$\\
\hline
discrete (D)  & PM & 
$O\left(n^2\right)$
&$O\left(n^2 \, \frac{\log\log n}{\log n}\right)$
& \cite{AgarwalBAKaSh14} uses WRAM \\
\hline
discrete (C)  & DT & 
$O(n^2)$&
$O\left(n^{4/3}\log^c n\right)$ &
implicit in \cite{AgarwalBAKaSh14}\\
\hline
weak (D)  & PM & 
$O(n^2)$&
$O\left(n^2 \, \frac{\alpha(n)\log\log n}{\log n}\right)$\\
\hline
weak (D)  & WRAM & 
$O(n^2)$&
$O\left(n^2 \, \frac{(\log\log n)^5}{\log^2 n}\right)$\\
\hline
weak (C)  & PM & 
$O(n^2\log n)$&
$O\left(n^2 \alpha(n)\log\log n\right)$\\
\hline
weak (C)  & WRAM & 
$O(n^2\log n)$&
$O\left(n^2 \frac{(\log\log n)^5}{\log n}\right)$\\
\end{tabular}
\caption{Summary of results. We distinguish the
continuous and the discrete \frechet distance in the 
decision (D) and the computation (C) version.
We also consider the weak continuous \frechet distance.
The computational models are the pointer machine (PM),
the word RAM (WRAM) or the algebraic decision trees (DT).
The old bounds are due to Alt and Godau~\cite{AltGo95}
(continuous and weak) and Eiter and 
Mannila~\cite{EiterMannila94} (discrete).}
\label{tab:results}
\end{table}

\paragraph{Recent developments}
Recently, Ben Avraham~\etal~\cite{BenAvFiKaKaSh15}
presented a subquadratic algorithm for the discrete 
\frechet distance with shortcuts that runs in 
$O(n^{4/3}\log^3 n)$ time. This running time resembles,
at least superficially, our result on algebraic computation
trees for the general discrete \frechet distance.

When we initially announced our results, we believed 
that they provided strong evidence that Alt's conjecture
is false. Indeed, for a long time it was conjectured 
that no subquadratic decision tree exists for 
3SUM~\cite{Patrascu:2010} and an $\Omega (n^2)$ lower 
bound is known in a restricted linear decision tree 
model~\cite{AilonCh05,Erickson99}. However, in a
recent--and in our opinion quite astonishing--result,
Gr\o{}nlund and Pettie showed that if we allow only
slightly more powerful algebraic decision trees than
in the previous lower bounds, one can decide 3SUM 
non-uniformly in $O(n^{2-\eps})$ steps,
for some fixed $\eps > 0$~\cite{GronlundPe14}.
They also show that this leads to a general subquadratic
algorithm for 3SUM, a situation very similar to
the \frechet distance as described in the present
paper. Thus, despite some interesting developments,
the status of Alt's conjecture remains as open as
before. However, we can now see that there 
exists a wide variety of efficiently solvable problems 
such as (in addition to 3SUM and the \frechet distance) 
\textsc{Sorting $X+Y$}~\cite{Fredman75},
\textsc{Min-Plus-Convolution}~\cite{BremnerChDeErHuIaLaTa12}, 
or finding the Delaunay triangulation for a point set 
that has been sorted in two orthogonal
directions~\cite{BuchinMu11}, for which there seems to be
a noticeable gap between the decision tree complexity
and the uniform complexity.

In our initial announcement, we also asked whether, 
besides 3SUM-hardness, there may be other reasons 
to believe that the quadratic running time for the
\frechet distance cannot be improved. Karl Bringmann provided
an interesting answer to this question by showing that
any algorithm for the \frechet distance with running 
time $O(n^{2-\eps})$, for some fixed $\eps > 0$,
would violate the \emph{strong exponential time hypothesis}
(SETH)~\cite{Bringmann14}.
These results were later refined and improved to show that
the lower bound holds in basically all settings (with the 
notable exception of the 
one-dimensional continuous \frechet 
distance, which is still unresolved)~\cite{BringmannMu16}. 
We believe that these developments show that
the \frechet distance still holds many interesting
aspects to be discovered and remains an intriguing
object of further study. 

\begin{figure*}
\centering
\includegraphics{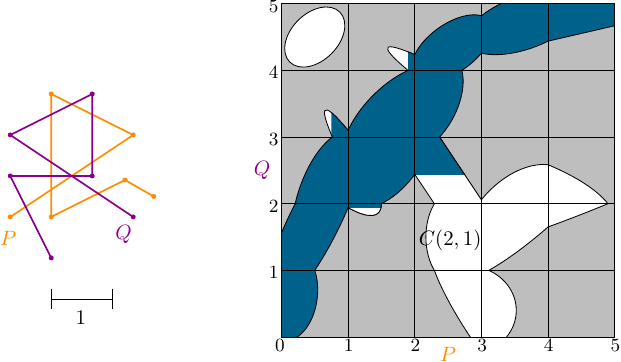}
\caption{Two polygonal curves $P$ and $Q$, together 
with their associated free-space diagram.
The reachable region $\reach(P,Q)$ is shown in blue.
For example, the white area in $C(2,1)$, denoted $F(2,1)$, 
corresponds to all points on the third edge of $P$ 
and the second edge of $Q$ that have distance at most $1$.
As $(5,5) \in \reach(P,Q)$, we have $d_F(P,Q) \leq 1$.}
\label{fig:freespace}
\end{figure*}

\section{Preliminaries and Basic Definitions}
\label{sec:prelim}

Let $P$ and $Q$ be two polygonal curves in the plane, 
defined by their vertices $p_0, p_1, \dots, p_n$ and 
$q_0, q_1, \dots, q_n$.  Depending on the context, 
we interpret $P$ and $Q$ either as sequences of $n$ 
and $n$ edges, or as continuous functions 
$P\colon [0, n] \rightarrow \R^2$ and 
$Q\colon [0, n] \rightarrow \R^2$.  In the latter 
case, we have $P(i + \lambda) = (1-\lambda)p_i + \lambda p_{i+1}$ 
for $i = 0, \dots, n-1$ and $\lambda \in [0,1]$, 
and similarly for $Q$.  Let $\Psi$ be the set of 
all continuous and nondecreasing functions 
$\alpha\colon [0,1] \rightarrow [0,n]$ with $\alpha(0) = 0$
and $\alpha(1) = n$. 
The \emph{\frechet distance}
between $P$ and $Q$ is defined as
\[
d_F(P,Q) \eqdef
\inf_{\alpha, \beta \in \Psi} \max_{x \in [0,1]} \| P(\alpha(x)) - Q(\beta(x)) \|,
\]
where $\|\cdot\|$ denotes the Euclidean distance.

The classic approach to computing $d_F(P, Q)$ uses 
the \emph{free-space diagram} $\FSD(P, Q)$. It is 
defined as
\[
  \FSD(P,Q) \eqdef \{(x,y) \in [0,n] \times [0,n] \mid \|P(x) - Q(y) \| \leq 1\}.
\]
In other words, $\FSD(P, Q)$ is the subset of the 
joint parameter space for $P$ and $Q$ where the 
corresponding points on the curves have distance
at most $1$, see Figure~\ref{fig:freespace}.

The structure of $\FSD(P,Q)$ is easy to describe.
Let $R \eqdef [0,n] \times [0, n]$ be the ground 
set. We subdivide $R$ into $n^2$ \emph{cells}
$C(i,j) = [i, i+1] \times [j, j+1]$, for 
$i,j = 0, \dots, n-1$.
The cell $C(i,j)$ corresponds to the edge pair 
$e_{i+1}$ and $f_{j+1}$, where $e_{i+1}$ is the 
$(i+1)$\textsuperscript{th} edge of $P$ and
$f_{j+1}$ is the $(j+1)$\textsuperscript{th} edge
of $Q$.  Then the set 
$F(i,j) \eqdef \FSD(P,Q) \cap C(i,j)$ represents
all pairs of points on $e_{i+1} \times f_{j+1}$
with distance at most $1$.  Elementary geometry
shows that $F(i,j)$ is the intersection of 
$C(i,j)$ with an ellipse~\cite{AltGo95}. In
particular, the set $F(i,j)$ is convex,
and the intersection of $\FSD(P,Q)$ with the
boundary of $C(i,j)$ consists of four (possibly
empty) intervals, one on each side of $\bd C(i,j)$.
We call these intervals the \emph{doors} of $C(i,j)$
in $\FSD(P,Q)$. A door is said to be \emph{closed}
if the interval is empty, and \emph{open} otherwise.

A path $\pi$ in $\FSD(P,Q)$ is \emph{bimonotone} if 
it is both $x$- and $y$-monotone, i.e., every vertical
and every horizontal line intersects $\pi$ in at
most one connected component. Alt and Godau observed
that it suffices to decide whether there exists a
bimonotone path from $(0,0)$ to $(n,n)$ inside
$\FSD(P,Q)$. We define the \emph{reachable region} 
$\reach(P,Q)$ as the set of points in $\FSD(P,Q)$ 
that are reachable from $(0,0)$ on a bimonotone path.
Then, $d_F(P,Q) \leq 1$ if and only if $(n,n) \in \reach(P,Q)$,
see Figure~\ref{fig:freespace}.  It is not necessary
to compute all of $\reach(P,Q)$: since $\FSD(P,Q)$
is convex inside each cell, we only need 
the intersections $\reach(P,Q) \cap \bd C(i,j)$.
The sets defined by $\reach(P,Q) \cap \bd C(i,j)$ are
subintervals of the doors of the free-space
diagram, and they are defined by endpoints of 
doors in the free-space diagram in the same row or column.
We call the intersection of a door with $\reach(P,Q)$ a 
\emph{reach-door}. The reach-doors can be found in 
$O(n^2)$ time through a simple breadth-first-traversal of the 
cells~\cite{AltGo95}.  In the next sections, we 
show how to obtain the crucial information, 
i.e., whether $(n,n) \in \reach(P,Q)$, in $o(n^2)$
time instead.

\paragraph{Basic approach and intuition}
In our algorithm for the decision problem, we basically 
want to compute $\reach(P,Q)$. But instead of propagating
the reachability information cell by cell, we always group 
$\tau \times \tau$ cells (with $1 \ll \tau \ll n$) 
into an \emph{elementary box} of cells. 
When processing a box, we can assume that we know which 
parts of the left and the bottom boundary of the box are
reachable. That is, we know the reach-doors on the bottom
and left boundary, and we need to compute the reach-doors
on the top and right boundary of the elementary box.
These reach-doors are determined by the combinatorial
structure of the box. More specifically, suppose we
know for every row and column the order of the door endpoints 
(including for the reach-doors on the left and bottom boundary).
Then, we can deduce which of these door boundaries 
determine the reach-doors on the top and right boundary. 
We call the sequence of these
orders, the \emph{(full) signature} of the box.

The total number of possible signatures is bounded by an
expression in terms of $\tau$. Thus, if we 
pick $\tau$ sufficiently small compared to
$n$, we can pre-compute for all possible signatures
the reach-doors on the top and right boundary, and build a
data structure to query these quickly (Section~\ref{sec:lookup}).
Since the reach-doors on the bottom and left boundary are
required to make the signature, we initially have only
incomplete signatures. In Section~\ref{sec:preproc2}, we
describe how to compute these efficiently. The incomplete
signatures are then used to preprocess the data structure such
that we can quickly find the full signature once we know
the reach-doors of an elementary box. After building and
preprocessing the data structure, it is possible to determine
$d_F(P,Q) \leq 1$ efficiently by traversing the free-space
diagram elementary box by elementary box, as explained
in Section~\ref{sec:procFSD}.

\section{Building a Lookup Table}
\label{sec:lookup}

\subsection{Preprocessing an elementary box}

Before it considers the input, our algorithm builds 
a lookup table. As mentioned above, the purpose of 
this table is to speed up the computation
of small parts of the free-space diagram.

\begin{figure}[t]
\centering
\includegraphics[scale=0.85]{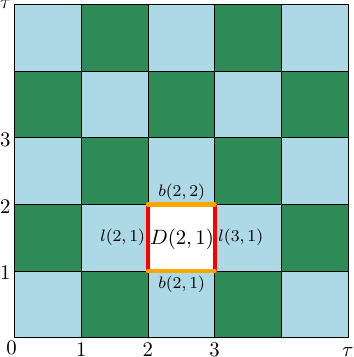}
\caption{The elementary box. The cell $D(2,1)$ is shown white.
Its boundaries---$l(2,1)$, $l(3,1)$, $b(2,1)$, $b(2,2)$---are indicated.}
\label{fig:elemgrid}
\end{figure}

\begin{figure*}[ht]
\centering
\includegraphics{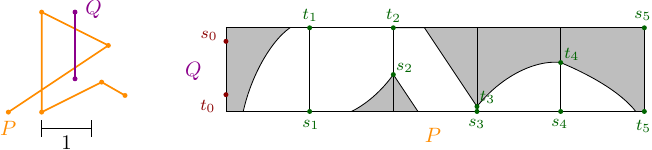}
\caption{The door-order of a row (the vertical
order of the points) encodes the combinatorial
structure of the doors. The door-order for the 
row in the figure is $s_1 s_3 s_4 t_5 t_3 t_0 s_2 t_4 s_0 s_5 t_1 t_2$. 
Note that $s_0$ and $t_0$ represent the reach-door, 
which is empty in this case. These are omitted
in the incomplete door-order.}
\label{fig:signature}
\end{figure*}
Let $\tau \in \N$ be a parameter.\footnote{A
preview for the impatient reader: we
later set $\tau = \Theta(\sqrt{\log n/\log\log n})$.}
The \emph{elementary box} is a subdivision
of $[0,\tau]^2$ into $\tau$ columns
and rows, thus $\tau^2$ cells.\footnote{For now,
the elementary box is a combinatorial concept. In
the next section, we overlay these boxes on
the free-space diagram to obtain ``concrete''
elementary boxes.}
For $i,j = 0, \dots, \tau-1$, we denote the
cell $[i,i+1] \times [j, j+1]$ with $D(i,j)$.  We
denote the left side of the boundary
$\bd D(i,j)$ by $l(i,j)$ and the bottom
side by $b(i,j)$.  Note that $l(i,j)$ coincides
with the right side of $\bd D(i-1, j)$ and $b(i,j)$
with the top of $\bd D(i, j-1)$.  Thus, we write
$l(\tau, j)$ for the \emph{right} side of $D(\tau - 1, j)$ and
$b(i, \tau)$ for the \emph{top} side of $D(i, \tau - 1)$.
Figure~\ref{fig:elemgrid} shows the elementary box.

The \emph{door-order} $\sigma_j^r$ for a row $j$ is a
permutation of $\{s_0, t_0, \dots, s_{\tau}, t_{\tau}\}$,
having $2\tau+2$ elements.  For $i = 1, \dots, \tau$,
the element $s_i$ represents the lower endpoint of
the door on $l(i,j)$, and $t_i$ represents the upper
endpoint.  The elements $s_0$ and $t_0$ are an
exception: they describe the reach-door on the
boundary $l(0, j)$ (i.e., its intersection with
$\reach(P,Q)$).  The door-order $\sigma_j^r$ represents
the combinatorial order of these endpoints, as
projected onto a vertical line, i.e.,
they are sorted into their vertical order.
Some door-orders may encode the same combinatorial structure.
In particular, when door $i$ is closed, the exact position
of $s_i$ and $t_i$ in a door-order is irrelevant, as long as $t_i$
comes before $s_i$.  For a closed door $i$ ($i > 0$), we
assign $s_i$ to the upper endpoint of $l(i,j)$ and $t_i$ to the
lower endpoint. The values of $s_0$ and $t_0$ are
defined by the reach-door and their relative order
is thus a result of computation.  We break ties
between $s_i$ and $t_{i'}$ by placing $s_i$ before $t_{i'}$, and any
other ties are resolved by index.
A door-order $\sigma_i^c$ is defined analogously for a column $i$.
We write $x <_i^c y$ if $x$ comes before $y$ in
$\sigma_i^c$, and $x <_j^r y$ if $x$ comes before $y$ in
$\sigma_j^r$. An \emph{incomplete door-order} is a door-order in 
which $s_0$ and $t_0$ are omitted (i.e. the intersection of 
$\reach(P,Q)$ with the door is
still unknown); see Figure~\ref{fig:signature}.

We can now define the \emph{(full) signature} of the elementary box as
the aggregation of
the door-orders of its rows and columns.  Therefore, a signature
$\Sigma = (\sigma_1^c, \dots, \sigma_\tau^c, \sigma_1^r, \dots,
\sigma_\tau^r)$ consists of $2\tau$ door-orders: one door-order
$\sigma_i^c$ for each column $i$ and one door-order $\sigma_j^r$ for
each row $j$ of the elementary box. Similarly, an 
\emph{incomplete signature} is the aggregation
of incomplete door-orders.

For a given signature, we define the \emph{combinatorial reachability structure} of the elementary box as follows.
For each column $i$ and for each row $j$, the combinatorial reachability structure indicates
which door boundaries in the respective column or row
define the reach-door of $b(i, \tau)$ or $l(\tau, j)$.

\begin{lemma}\label{lem:reachability}
Let $\Sigma$ be a signature for the elementary box.
Then we can determine the combinatorial reachability 
structure of the box in total time $O(\tau^2)$.
\end{lemma}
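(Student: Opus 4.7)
The plan is a direct dynamic-programming traversal of the $\tau^2$ cells of the elementary box, in any order that respects dependencies (for instance, increasing $i+j$), propagating combinatorial descriptions of reach-doors cell by cell using Alt and Godau's local propagation rule. The goal is to show that after appropriate preprocessing, each cell can be handled in $O(1)$ time by a constant number of comparisons in the relevant row and column door-orders, which immediately yields the claimed $O(\tau^2)$ bound.

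The first step is to preprocess each of the $2\tau$ door-orders $\sigma_j^r, \sigma_i^c$ into an inverse-permutation array that maps every label $s_k$ or $t_k$ to its rank; each array takes $O(\tau)$ time, for $O(\tau^2)$ in total. This makes every comparison $<_j^r$ or $<_i^c$ run in $O(1)$. The reach-doors on the initial left and bottom sides $l(0,j)$ and $b(i,0)$ are already described by the distinguished labels $s_0,t_0$ of the respective door-orders plus an empty/non-empty flag. Throughout the traversal I maintain the invariant that every reach-door on a vertical side $l(i,j)$ is stored as a pair of labels from $\sigma_j^r$, and every reach-door on a horizontal side $b(i,j)$ as a pair of labels from $\sigma_i^c$, together with a flag for emptiness.

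For a single cell $D(i,j)$, the free-space region $F(i,j)$ is convex, so Alt and Godau's rule reads: the reach-door on the right side $l(i+1,j)$ equals the full door $[s_{i+1},t_{i+1}]$ whenever the reach-door on $b(i,j)$ is non-empty; otherwise it equals $[\max(L_{\text{low}},s_{i+1}),\,t_{i+1}]$, declared empty if the lower label ends up above $t_{i+1}$ under $<_j^r$, where $L_{\text{low}}$ is the lower-endpoint label of the reach-door on $l(i,j)$. A symmetric rule, now using $<_i^c$ and the bottom-reach label $B_{\text{low}}$, computes the reach-door on the top side $b(i,j+1)$. Each such update uses only a constant number of rank comparisons, and the endpoint labels it outputs are again labels in the appropriate door-order, so the invariant is preserved.

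Summing the $O(1)$ work per cell over the $\tau^2$ cells and adding the $O(\tau^2)$ preprocessing gives the claimed $O(\tau^2)$ bound, and the reach-doors on $l(\tau,j)$ and $b(i,\tau)$ emerging from the last column and row are precisely the combinatorial reachability structure of $\Sigma$. The point that needs a little care is the invariant itself: the $\max$ of two labels in the same door-order is again a label in that order, and the ``enter from the bottom'' branch replaces the lower endpoint by $s_{i+1}$, so by induction every reach-door endpoint we ever produce is a valid label that can be compared in constant time via the inverse-permutation arrays. This is the main obstacle; once it is established, the rest of the argument is a routine verification of Alt and Godau's propagation rule carried out purely combinatorially.
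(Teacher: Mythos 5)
Your proposal is correct and follows essentially the same route as the paper: a dynamic-programming sweep over the $\tau^2$ cells using the Alt--Godau propagation rule, with each reach-door stored as a pair of door-order labels so that every cell update is a constant number of combinatorial comparisons. The inverse-permutation arrays for $O(1)$ rank comparison are a sensible implementation detail that the paper leaves implicit; the only small omission in your write-up is the case where both the bottom and left incoming reach-doors are closed, in which the output must also be declared closed (the paper's Case~2) rather than being computed by the $\max$ rule, but this is a one-line fix that does not affect the argument.
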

\begin{proof}
We use dynamic programming, very similar to the algorithm by Alt and
Godau~\cite{AltGo95}.
For each vertical edge $l(i,j)$ we define a variable $\widehat{l}(i,j)$, and
for each horizontal edge $b(i,j)$ we define a variable $\widehat{b}(i,j)$.
The $\widehat{l}(i,j)$ are pairs of the form $(s_u, t_v)$, representing the
reach-door $\reach(P,Q) \cap l(i,j)$.
If this reach-door is closed, then $t_v <_j^r s_u$ holds.
If the reach-door is open, then it is bounded by the lower endpoint of the
door on $l(u,j)$ and by the upper endpoint of the door on $l(v,j)$. (Note
that in this case we have $v = i$.)
Once again $s_0$ and $t_0$ are special and represent the reach-door on
$l(0,j)$.
The variables $\widehat{b}(i,j)$ are defined analogously.

Now we can compute $\widehat{l}(i,j)$ and $\widehat{b}(i,j)$ recursively
as follows: first, we set
\[
\widehat{l}(0,j) = \widehat{b}(i,0) = (s_0, t_0), \quad \textnormal{for }
i,j = 0, \dots, \tau-1.
\]
Next, we describe how to find $\widehat{l}(i,j)$ given $\widehat{l}(i-1, j)$
and $\widehat{b}(i-1, j)$, see Figure~\ref{fig:recursion}.

\begin{figure}[t]
\centering
\includegraphics{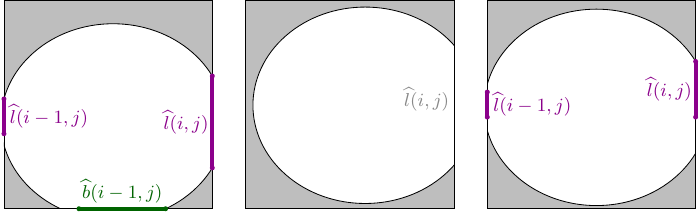}
\caption{The three cases for the recursive definition of 
$\widehat{l}(i,j)$.
If the lower boundary is reachable, we can reach the whole right
door (left). If neither the lower nor the left boundary is reachable,
the right door is not reachable either (middle). Otherwise,
the lower boundary is the maximum of
$\widehat{l}(i-1,j)$ and the lower boundary of the right door (right).}
\label{fig:recursion}
\end{figure}

\textbf{Case 1:} Suppose $\widehat{b}(i-1, j)$ is open. This means that
$b(i-1, j)$ intersects $\reach(P,Q)$, so $\reach(P,Q) \cap l(i,j)$ is
limited only by the door on $l(i,j)$, and we can set
$\widehat{l}(i,j) \eqdef (s_i, t_i)$.

\textbf{Case 2:} If both $\widehat{b}(i-1, j)$ and 
$\widehat{l}(i-1, j)$ are
closed, it is impossible to reach $l(i,j)$ and thus we set
$\widehat{l}(i,j) \eqdef \widehat{l}(i-1,j)$.

\textbf{Case 3:}
If $\widehat{b}(i-1, j)$ is closed and $\widehat{l}(i-1, j)$ is open, we
may be able to reach $l(i,j)$ via $l(i-1,j)$.
Let $s_u$ be the lower endpoint of $\widehat{l}(i-1, j)$. We need to pass
$l(i,j)$ above $s_u$ and $s_i$ and below $t_i$, and therefore set
$\widehat{l}(i,j) \eqdef (\max(s_u, s_i), t_i)$, where the maximum
is taken according to the order $<_j^r$.

The recursion for the variable $\widehat{b}(i,j)$ is defined similarly.
We can implement the recursion in
time $O(\tau^2)$ for any given signature, for example by traversing the
elementary box column by column, while processing each column from bottom
to top.
\end{proof}

There are at most $((2\tau+2)!)^{2\tau} = \tau^{O(\tau^2)}$ 
distinct signatures for the elementary box.
We choose $\tau = \lambda \sqrt{\log n/\log\log n}$ 
for a sufficiently small constant $\lambda>0$, so 
that this number becomes $o(n)$. Thus, during the 
preprocessing stage we have time to enumerate
all possible signatures and determine the 
corresponding combinatorial reachability structure 
inside the elementary box. This information is then 
stored in an appropriate data structure.

\subsection{Building the data structure}\label{ssec:building}

Before we describe this data structure, we first 
explain how the door-orders are represented. This 
depends on the computational model. By our choice 
of $\tau$, there are $o(n)$ distinct door-orders.
On the word RAM, we represent each door-order and 
incomplete door-order by an integer between $1$ and 
$(2\tau)!$. This fits into a word of $\log n$ bits.
On the pointer machine, we create a record for each 
door-order and incomplete door-order; we represent 
an order by a pointer to the corresponding record.

The data structure has two \emph{stages}.
In the first stage, 
we assume we know the incomplete door-order 
for each row and for each column of
the elementary box\footnote{In the next 
section, we describe how to determine the
incomplete door-orders efficiently.}, and 
we wish to determine the incomplete signature.
In the second stage  
we have obtained the reach-doors for the left and bottom
sides of the elementary box, and we are looking for the 
full signature.
The details of our method depend on the computational model.
One way uses table lookup and requires the word RAM; the other
way works on the pointer machine, but is a bit more involved.

\paragraph{Word RAM}
  We organize the lookup table as a large tree $T$.
  In the first stage,
  each level of $T$ corresponds to a row or column of the elementary
  box. Thus, there are $2\tau$ levels.
  Each node has $(2\tau)!$ children, representing the possible incomplete
  door-orders for the next row or column.
  Since we represent door-orders by positive integers, each node of $T$
  may store an array for its children; we
  can choose the appropriate child for a given incomplete door-order
  in constant time. Thus, determining the incomplete signature for an
  elementary box requires $O(\tau)$ steps on a word RAM.

  For the second stage, we again use a tree structure. Now
  the tree has $O(\tau)$ \emph{layers}, each with $O(\log \tau)$ levels.
  Again, each layer corresponds to a row or column of the elementary box.
  The levels inside each layer then implement a balanced binary search tree
  that allows us to locate the endpoints of the reach-door within the
  incomplete signature. Since there are $2\tau$ endpoints, this
  requires $O(\log \tau)$ levels.
  Thus, it takes $O(\tau \log \tau)$ time to find the full signature of a
  given elementary box.

\paragraph{Pointer machine}
Unlike in the word RAM model, we are not allowed to store
a lookup table on every level of the tree $T$, and there is no way to
quickly find the appropriate child for a given door-order. Instead,
we must rely on batch processing to achieve a reasonable
running time.

Thus, suppose that during the first stage we want to
find the incomplete signatures for a set $B$ of  $m$
elementary boxes, where again for each box in $B$ we know the
incomplete door-order for each row and
each column. Recall that we represent the door-order by a pointer to the
corresponding record. With each such record, we store a
queue of elementary boxes that is empty initially.

We now simultaneously propagate the boxes in $B$ through $T$,
proceeding level by level. In the first level, all of $B$
is assigned to the root of $T$.
Then, we go through the nodes of one level of $T$, from left to right.
Let $v$ be the current node of $T$.
We consider each elementary box $b$ assigned to $v$. We determine
the next incomplete door-order for $b$, and we append $b$
to the queue for this incomplete door-order---the queue is addressed
through the corresponding record, so all elementary boxes
with the same next incomplete door-order  end up in the same queue.
Next, we go through the nodes of the next level, again from left to right.
Let $v'$ be the
current node. The node $v'$ corresponds to a next incomplete  door-order
$\sigma$ that extends the known signature of its parents. We consider
the queue stored at the record for $\sigma$. By construction, the
elementary boxes that should be assigned to $v'$  appear
consecutively at the beginning of this queue. We remove these boxes from
the queue and assign them to $v'$. After this, all the queues are empty,
and we can continue by propagating the boxes to the next level.
During this procedure, we traverse each node of $T$ a constant number of
times, and in each level of the $T$ we consider all the boxes in
$B$. Since $T$ has $o(n)$ nodes, the total running time is
$O(n + m\tau)$.

For the second stage, the data structure works just as in the 
word RAM case, because no table lookup is necessary.
Again, we need $O(\tau \log \tau)$ steps to process one box.
After the second stage, we obtain the combinatorial
reachability structure of the box in constant time since we 
precomputed this information for each box (Lemma~\ref{lem:reachability}).
Thus, we have shown the following lemma, independently of the 
computational model.

\begin{lemma}\label{lem:data-struct}
For $\tau = \lambda \sqrt{\log n/\log\log n}$, with a 
sufficiently small constant $\lambda>0$, we can 
construct in $o(n)$ time a data structure of size 
$o(n)$ such that
\begin{itemize}
\item given a set of $m$ elementary boxes 
where the incomplete door-orders are known, 
we can find the incomplete signature of each 
box in total time $O(n + m \tau)$;
\item given the incomplete signature and the 
reach-doors on the bottom and left boundary 
of an elementary box, we can find the full 
signature in $O(\tau \log \tau)$ time;
\item given the full signature of an elementary 
box, we can find the combinatorial
reachability structure of the box in constant 
time.
\end{itemize}
\end{lemma}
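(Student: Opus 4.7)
The plan is to prove the three bullets after first checking that the parameter choice makes the relevant enumerations $o(n)$. A partial door-order is a permutation of $2\tau$ symbols, so there are $(2\tau)! = 2^{O(\tau \log \tau)}$ of them, and a full signature comprises $2\tau$ door-orders over $2\tau+2$ symbols, giving at most $((2\tau+2)!)^{2\tau} = 2^{O(\tau^2 \log \tau)}$ full signatures. With $\tau^2 = O(\log n / \log \log n)$ and $\log \tau = O(\log \log n)$, both expressions evaluate to $n^{O(\lambda^2)}$, which is $o(n)$ for $\lambda$ small enough. In a preprocessing phase I would enumerate every full signature, invoke the preceding $O(\tau^2)$-time lemma to compute its combinatorial reachability structure, and link it from the corresponding leaf of the data structure built below; total preprocessing time and space are $o(n)\cdot O(\tau^2) = o(n)$.

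For the first stage I would build a rooted trie $T$ of depth $2\tau$, one level per row or column of the elementary box, where an internal node at depth $k$ represents a prefix of $k$ partial door-orders. On the word RAM, partial door-orders are integers in $\{1, \dots, (2\tau)!\}$, so each internal node stores an array of children indexed by partial door-order, and routing one box through $T$ takes $O(\tau)$ time by constant-time descents. On the pointer machine such indexed arrays are unavailable, so I would process the $m$ boxes simultaneously: at each level, every presently-located box computes its next partial door-order and is appended to a queue attached to the global record for that door-order; the next level of $T$ is then swept left-to-right, and each node harvests from its door-order's queue the consecutive prefix belonging to it. This charges $O(1)$ per box per level plus $O(1)$ per node of $T$ touched, for a total of $O(n + m\tau)$.

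For the second stage I would equip each node of $T$, which already determines the partial signature, with one balanced binary search tree per row and column, storing that row's or column's $2\tau$ endpoints under $<_j^r$ or $<_i^c$. Given the reach-door endpoints $s_0, t_0$ on the bottom and left of a specific elementary box, I locate their positions in the appropriate BSTs in $O(\log \tau)$ each and insert them to obtain the full door-orders; doing this for all $2\tau$ affected rows and columns yields the full signature in $O(\tau \log \tau)$ time. This procedure works identically on both computational models because no integer indexing over door-orders is required. The third bullet is then immediate, since the full signature points directly into the preprocessed table of combinatorial reachability structures.

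The main obstacle is the first stage on the pointer machine: without constant-time array lookups keyed by partial door-order, a naive descent would have to scan node children linearly, which is unaffordable. The batch-processing trick circumvents this by routing all $m$ boxes through $T$ in lockstep and using queues anchored at the $o(n)$ door-order records to emulate, globally, the child-arrays that individual nodes cannot store; the additive $O(n)$ in the bound pays for touching every node of $T$ a constant number of times, which is affordable precisely because $T$ has $o(n)$ nodes.
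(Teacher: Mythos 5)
Your proposal matches the paper's construction essentially step for step: the same trie of depth $2\tau$ indexed by partial door-order, the same array-lookup descent on the word RAM, the same batch-processing with queues anchored at the $o(n)$ door-order records on the pointer machine, the same $O(\tau \log \tau)$ second stage via per-row and per-column balanced search structures, and the same precomputation of all combinatorial reachability structures via the preceding $O(\tau^2)$-time lemma. The bookkeeping of the counting argument for $o(n)$ also agrees with the paper.
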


\section{Preprocessing a Given Input}
\label{sec:preproc2}

Next, we perform a second preprocessing phase 
that considers the input curves $P$ and $Q$.
Our eventual goal is to compute the intersection 
of $\reach(P,Q)$ with the cell boundaries, 
taking advantage of the data structure
from Section~\ref{sec:lookup}. For this, we 
aggregate the cells of $\FSD(P,Q)$ into (concrete) 
elementary boxes consisting of $\tau \times \tau$ cells.
There are $n^2/\tau^2$ such boxes.
We may avoid rounding issues by either 
duplicating vertices or handling a small part 
of $\FSD(P,Q)$ without lookup tables.

The goal is to determine the signature for 
each elementary box $S$. At this point, this 
is not quite possible yet, since the signature 
depends on the intersection of $\reach(P,Q)$ 
with the lower and left boundary of $S$. 
Nonetheless, we can find the incomplete signature, 
in which the positions of $s_0, t_0$ (the 
reach-door) in the (incomplete) door-orders 
$\sigma_i^r$, $\sigma_j^c$ are still to 
be determined.

\begin{figure*}
\centering
\includegraphics[scale=0.85]{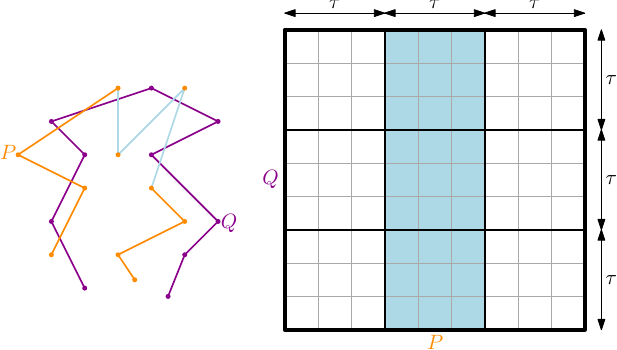}
\caption{$\FSD(P,Q)$ is subdivided into 
$n^2/\tau^2$ elementary boxes of size 
$\tau \times \tau$. The free-space diagram 
is subdivided into $n^2/\tau^2$ elementary 
boxes of size $\tau \times \tau$. A strip 
is a column of elementary boxes: it corresponds 
to a subcurve of $P$ with $\tau$ edges.}
\label{fig:fsdstrips}
\end{figure*}
We aggregate the columns of $\FSD(P,Q)$ 
into vertical \emph{strips}, each 
corresponding to a single column of 
elementary boxes (i.e., $\tau$ 
consecutive columns of cells in 
$\FSD(P,Q)$). See Figure~\ref{fig:fsdstrips}.

Let $A$ be such a strip. It corresponds 
to a subcurve $P'$ of $P$ with $\tau$ 
edges. The following lemma implies that 
we can build a data structure for $A$ 
such that, given any segment of $Q$, we 
can efficiently find its incomplete door-order 
within the elementary box in $A$.

\begin{lemma}\label{lem:preprocess_input}
Given a subcurve $P'$ with $\tau$ edges, 
we can compute in $O(\tau^6)$ time a data 
structure that requires $O(\tau^6)$ space 
and that allows us to determine the 
incomplete door-order of any line 
segment on $Q$ in time $O(\log \tau)$.
\end{lemma}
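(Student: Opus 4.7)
The plan is to partition the 4-dimensional parameter space of line segments in $\R^2$ (each segment specified by its two endpoints) into cells on which the partial door-order with respect to $P'$ is constant, and to equip this partition with a point-location structure that answers queries in $O(\log \tau)$ time. The vertices $p_0, p_1, \dots, p_\tau$ of $P'$ are fixed once and for all. For a query segment $q$ and each index $i$, the door on the vertical edge corresponding to $p_i$ is the intersection of $q$ with the closed unit disk around $p_i$; this yields endpoints $s_i(q), t_i(q)$ on $q$, with the convention $t_i < s_i$ whenever the disk misses $q$ entirely. The partial door-order is then the permutation obtained by sorting the $2\tau$ values $s_1, t_1, \dots, s_\tau, t_\tau$ along $q$.

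Next I would identify the critical events at which this ordering can change as $q$ varies. For every ordered pair $(i,i') \in \{1, \dots, \tau\}^2$ and every choice of endpoint on each side (lower or upper), the locus of segments $q$ where two such endpoints coincide ($s_i = s_{i'}$, $s_i = t_{i'}$, or $t_i = t_{i'}$) is a semi-algebraic surface of bounded description complexity in the 4-dimensional segment space. Additional surfaces capture the events at which a unit disk becomes tangent to the line through $q$ (flipping the open/closed status of a door) and the events at which an endpoint of $q$ crosses a disk boundary (moving one of $s_i, t_i$ onto a segment endpoint). There are $O(\tau^2)$ such surfaces in total, each of $O(1)$ algebraic degree.

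The arrangement of these surfaces in $\R^4$ has combinatorial complexity $\tau^{O(1)}$ by the standard bound for arrangements of constant-description-complexity hypersurfaces. In each full-dimensional cell every pairwise comparison has a fixed outcome, so the partial door-order is indeed constant. During preprocessing I would traverse the arrangement and label each cell with a pointer (or, on the word RAM, an integer) identifying the corresponding record for that partial door-order already available from Lemma~\ref{lem:data-struct}. Finally I would build a point-location structure on top of this arrangement with $O(\log \tau)$ query time; for arrangements of bounded-complexity semi-algebraic surfaces in a space of fixed dimension, this is achievable in $\tau^{O(1)}$ space and preprocessing time via hierarchical cuttings. A query then consists of locating the 4-tuple encoding $q$ in the arrangement and reading off the stored label.

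The main obstacle will be the algebraic bookkeeping: checking that the surfaces listed above really do account for every combinatorial change, in particular those involving empty doors and segment endpoints landing on disk boundaries, so that the door-order is genuinely constant on each cell and respects the convention $t_i < s_i$ for closed doors. Choosing a point-location data structure that meets the claimed bounds on a pointer machine (and not just on a word RAM) is a minor additional subtlety, but since $\tau$ is polylogarithmic in $n$ we have considerable slack, and any fixed polynomial dependency $\tau^c$ is acceptable for both space and preprocessing time.
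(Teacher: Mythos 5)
Your approach is correct in outline but takes a genuinely different route from the paper's proof. The paper avoids working in the four-dimensional space of segments by dualizing to the plane: it observes that the combinatorial intersection pattern of the \emph{line} $\ell_s$ with the circle arrangement $\A$ changes only when $\ell_s$ sweeps over a vertex of $\A$ or when one of the two parallel lines $\ell_{sa},\ell_{sb}$ at distance $1$ sweeps over a circle center, dualizes that vertex set to a 2D line arrangement $\B$ of $O(\tau^4)$ complexity, and locates the three dual points $\ell_s^*,\ell_{sa}^*,\ell_{sb}^*$ in a Dobkin--Lipton slab decomposition of $\B$. The dependence on the actual endpoints of $s$ (as opposed to the whole line) is then handled as a second, small-degree-of-freedom layer on top, giving the explicit bound $O(\tau^{12})$ and a layered binary search tree that is manifestly pointer-machine friendly. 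Your version instead works directly in $\R^4$, identifying the $O(\tau^2)$ critical surfaces on which the order can change and invoking general point-location machinery for arrangements of constant-description-complexity surfaces. This is shorter to state and more systematic, but it leans on heavier tools (vertical decomposition or hierarchical cuttings for semi-algebraic arrangements in fixed dimension) whose pointer-machine implementation you correctly flag as a subtlety to be discharged; the paper's 2D duality sidesteps that dependency entirely. Both yield the required $\tau^{O(1)}$ preprocessing and $O(\log\tau)$ query, and as you note, the lemma only asks for \emph{some} constant $c$, so the larger exponent hidden in the 4D arrangement bound is harmless. One small bookkeeping slip: $P'$ has $\tau$ edges and hence $\tau+1$ vertices, so the index range for the disks should be $\{0,\dots,\tau\}$ rather than $\{1,\dots,\tau\}$, though this does not affect the asymptotics.
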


\begin{proof}
Consider the arrangement $\mathcal{A}$ 
of unit circles whose centers are the 
vertices of $P'$ (see Figure~\ref{fig:arranagement}).
The incomplete door-order of a line segment 
$s$ is determined by the intersections 
of $s$ with the arcs of $\mathcal{A}$ 
(and for a circle not intersecting $s$ 
by whether $s$ lies inside or outside 
of the circle).  Let $\ell_s$ be the 
line spanned by line segment $s$.
Suppose we wiggle $\ell_s$. The order 
of intersections of $\ell_{s}$ and the 
arcs of $\mathcal{A}$ changes only when 
$\ell_s$ moves over a vertex of $\mathcal{A}$ 
or if $\ell_s$ leaves or enters a circle.

\begin{figure*}[t]
\centering
\includegraphics{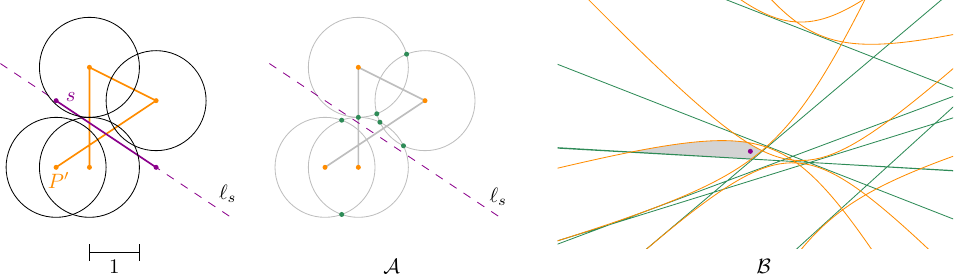}
\caption{By using the arrangement 
$\mathcal{A}$ defined by unit circles
centered at vertices of $P'$, we can 
determine the incomplete door-order of each
segment $s$ on $Q$. This is done by locating 
the dual point of $\ell_s$ in the dual 
arrangement $\mathcal{B}$. The dual 
arrangement also contains pseudolines to 
determine when $\ell_s$ leaves a circle 
of $\mathcal{A}$.}
\label{fig:arranagement}
\end{figure*}

We use the standard duality transform 
that maps a line $\ell: y = ax + b$ to 
the point $\ell^* : (a, -b)$, and vice 
versa.  Consider a unit circle $C$ in 
$\mathcal{A}$ with center $(c_x, c_y)$.
Elementary geometry shows that the set 
of all lines that are tangent to $C$ 
from above dualizes to the curve 
$t_a^*(C): y = c_x x - c_y - \sqrt{1+x^2}$. 
Similarly, the lines that are tangent to
$C$ from below dualize to the curve
$t_b^*(C): y = c_x x - c_y + \sqrt{1+x^2}$.
Define $C^* \eqdef \{t_a^*(C), t_b^*(C) \mid C \in \mathcal{A}\}$.
Since any pair of distinct circles
$C_1$, $C_2$ has at most four common 
tangents, one for each choice of 
above/below $C_1$ and above/below $C_2$, 
it follows that any two curves in $C^*$
intersect at most once.

Let $V$ be the set of vertices in $\mathcal{A}$, 
and let $V^*$ be the lines dual to the points 
in $V$ (note that $|V| = O(\tau^2)$).
Since for any vertex $v \in V$ and any circle
$C \in \mathcal{A}$ there are at most two tangents 
through $v$ on $C$, each line in $V^*$ intersects 
each curve in $C^*$ at most once. Thus, the 
arrangement $\mathcal{B}$ of the curves in
$V^* \cup C^*$ is an arrangement of \emph{pseudolines} 
with complexity $O(\tau^4)$. Furthermore, it 
can be constructed in the same expected time, 
together with a point location structure
that finds the containing cell in $\mathcal{B}$ 
of any given point in time 
$O(\log \tau)$~\cite[Chapter~6.6.1]{SharirAg95}.

Now consider a line segment $s$ and the supporting 
line $\ell_s$.  As observed in the first paragraph, 
the combinatorial structure of the intersection 
between $\ell_s$ and $\mathcal{A}$ is completely 
determined by the cell of $\mathcal{B}$ that
contains the dual point $\ell_s^*$.
Thus, for every cell $f(s) \in \mathcal{B}$,
we construct a list $L_{f(s)}$ that represents 
the combinatorial structure of $\ell_s \cap \mathcal{A}$.
There are $O(\tau^4)$ such lists, each having size
$O(\tau)$. We can compute $L_{f(s)}$ by traversing 
the zone of $\ell_s$ in $\mathcal{A}$.
Since circles intersect at most twice and since 
a line intersects any circle at most twice, 
the zone has complexity $O(\tau 2^{\alpha (\tau)})$, 
where $\alpha(\cdot)$ denotes the inverse Ackermann
function~\cite[Theorem 5.11]{SharirAg95}. Since
$O(\tau 2^{\alpha (\tau)}) \subset O(\tau^2)$, 
we can compute all lists in $O(\tau^{6})$ time.

Given the list $L_{f(s)}$, the incomplete door-order 
of $s$ is determined by the position of the 
endpoints of $s$ in $L_{f(s)}$. There are
$O(\tau^2)$ possible ways for this, and we 
build a table $T_{f(s)}$ that represents them. 
For each entry in $T_{f(s)}$, we store a 
representative for the corresponding incomplete 
door-order. As described in the previous 
section, the representative is a positive integer 
in the word RAM model and a pointer to the 
appropriate record on a pointer machine.

The total size of the data structure is 
$O(\tau^{6})$ and it can be constructed 
in the same time. A query works as follows:
given $s$, we can compute $\ell_s^*$ in
constant time. Then we use the point location 
structure of $\mathcal{B}$ to find $f(s)$ in 
$O(\log \tau)$ time. Using binary search on 
$T_{f(s)}$ (or an appropriate tree structure 
in the case of a point machine), we can then 
determine the position of the endpoints of $s$ 
in the list $L_{f(s)}$ in $O(\log \tau)$ time.
This bound holds both on the word RAM and
on the pointer machine.
\end{proof}

\begin{lemma}\label{lem:incomplete}
Given the data structure of Lemma~\ref{lem:data-struct}, 
the incomplete signature for each elementary 
box can be determined in time
$O(n\tau^{5} + n^2 (\log \tau) / \tau)$.
\end{lemma}

\begin{proof}
By building and using the data structure 
from Lemma~\ref{lem:preprocess_input}, we 
determine the incomplete door-order for each 
row in each vertical $\tau$-strip
in total time proportional to
\[
  \frac{n}{\tau}(\tau^6 + n \log \tau) = 
    n \tau^{5} + \frac{n^2 \log\tau}{\tau}.
\]

We repeat the procedure with the horizontal strips. 
Now we know for each elementary box in $\FSD(P,Q)$ 
the incomplete door-order for each row and each column. 
We use the data structure of Lemma~\ref{lem:data-struct} 
to combine these.  As there are $n^2 / \tau^2$ 
boxes, the number of steps is $O(n^2/\tau + n) = 
O(n^2/\tau)$.  Hence, the incomplete signature for each 
elementary box is found in $O(n\tau^{5} + n^2 (\log \tau) / \tau)$
steps.
\end{proof}

\section{Solving the Decision Problem}
\label{sec:procFSD}

With the data structures and preprocessing from the 
previous sections, we have all ingredients in place 
to determine whether $d_F(P,Q) \leq 1$. We know for 
each elementary box its incomplete signature and we have 
a data structure to derive its full signature (and with 
it, the combinatorial reachability structure) when its 
reach-doors are known. What remains to be shown is that 
we can efficiently process the free-space diagram to 
determine whether $(n,n) \in \reach(P,Q)$. This is captured 
in the following lemma. 

\begin{lemma}\label{lem:procFSD}
If the incomplete signature for each elementary box is 
known, we can determine whether $(n,n) \in \reach(P,Q)$ 
in time $O(n^2(\log \tau)/\tau)$.
\end{lemma}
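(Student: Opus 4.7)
The plan is to sweep across the $n^2/\tau^2$ elementary boxes in a dependency-respecting order---for instance, row-major from bottom to top and, within each row, left to right. When I reach a box $S$, I assume inductively that the reach-door coordinates on its bottom and left sides are already known, having been output by the neighbors below and to the left of $S$ (or by the initialization step below).

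I would first bootstrap the reach-doors along the bottom of the bottommost row of boxes and along the left of the leftmost column. Noting that $(0,0)\in\reach(P,Q)$ iff $\|P(0)-Q(0)\|\le 1$, I would sweep once along $y=0$ and once along $x=0$ in $O(n)$ total time to compute all reach-doors on $b(i,0)$ and $l(0,j)$ of the boundary boxes; these are simply the initial connected pieces of $\FSD(P,Q)$ along the two axes.

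For each box $S$ I would then carry out three steps. First, I would invoke stage~2 of the data structure from Lemma~\ref{lem:data-struct}: given $S$'s already-known partial signature together with the at most $2\tau$ reach-door coordinates on its bottom and left, locate these coordinates within the partial signature to obtain the full signature in $O(\tau\log\tau)$ time. Second, I would look up the precomputed combinatorial reachability structure for this full signature in $O(1)$ time; for each of the $\tau$ top segments $b(i,\tau)$ and $\tau$ right segments $l(\tau,j)$ of $S$, this structure names the pair of door endpoints in the corresponding column or row that bound the reach-door (or declares it empty). Third, I would translate each named endpoint $s_u$ or $t_u$ back into concrete coordinates in $[0,n]^2$ via a constant-time geometric computation---each such endpoint is the intersection of a specific ellipse (determined by one $P$-vertex and one $Q$-vertex) with a specific cell edge of $S$. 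Step three costs $O(\tau)$ per box and produces the reach-door coordinates that $S$'s upper and right-hand neighbors will ingest in their own stage-2 calls.

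Summing the $O(\tau\log\tau)$ per-box cost over the $n^2/\tau^2$ boxes yields the claimed bound $O(n^2(\log\tau)/\tau)$, which dominates the $O(n)$ initialization. After the top-right box has been processed, $(n,n)\in\reach(P,Q)$ iff $(n,n)$ lies inside the reach-door on that box's top or right side, a constant-time check. The only real subtlety is keeping the box-to-box interface clean: the output format on the top and right sides of $S$ must coincide with the input format expected by stage~2 in the next boxes. Since reach-doors are represented in both places as pairs of concrete points on a cell boundary, no reformatting is required and the induction goes through directly on top of Lemma~\ref{lem:data-struct}.
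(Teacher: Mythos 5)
Your proposal is correct and follows essentially the same route as the paper's proof: traverse elementary boxes in a dependency-respecting order, use the second stage of Lemma~\ref{lem:data-struct} to obtain the full signature and combinatorial reachability structure at $O(\tau\log\tau)$ per box, then spend $O(\tau)$ per box translating the abstract $s_i,t_i$ back to concrete points, giving $O(n^2(\log\tau)/\tau)$ overall. One small geometric slip: the door endpoints on a cell boundary are intersections of a unit \emph{circle} centered at a vertex of one curve with an \emph{edge} of the other curve, not of an ellipse determined by two vertices---but this does not affect the argument or the $O(1)$-per-endpoint cost.
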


\begin{proof}
We go through all elementary boxes of $\FSD(P,Q)$, 
processing them one column at a time, going from bottom 
to top in each column. Initially, we know the full 
signature for the box $S$ in the lower left corner 
of $\FSD(P,Q)$. We use the signature to determine the 
intersections of $\reach(P,Q)$ with the upper and right 
boundary of $S$. There is a subtlety here: the 
signature gives us only the combinatorial reachability 
structure, and we need to map the resulting $s_i, t_j$ 
back to the corresponding vertices on the curves. On
the word RAM, this can be done easily through table lookups. 
On the pointer machine, we use representative records for 
the $s_i, t_i$ elements and use $O(\tau)$ time before
processing the box to store a pointer from each representative 
record to the appropriate vertices on $P$ and $Q$.

We proceed similarly for the other boxes. By the choice 
of the processing order of the elementary boxes we always 
know the incoming reach-doors on the bottom and left 
boundary when processing a box.
Given the incoming reach-doors, we can determine the full 
signature and find the structure of the outgoing reach-doors 
in total time $O(\tau \log \tau)$, using Lemma~\ref{lem:data-struct}. 
Again, we need $O(\tau)$ additional time on the pointer machine 
to establish the mapping from the abstract
$s_i$, $t_i$ elements to the concrete vertices of $P$ and $Q$.
In total, we spend $O(\tau \log \tau)$ time per box. Thus, it takes time
$O(n^2(\log \tau)/\tau)$ to process all boxes, as claimed.
\end{proof}

As a result, we obtain the following theorem 
for the pointer machine (and, by extension, 
for the real RAM model). For the word RAM model, 
we can obtain an even faster algorithm (see 
Section~\ref{sec:wordram}).

\begin{theorem}\label{thm:decpointer}
There is an algorithm that solves the decision 
version of the \frechet problem in 
$O(n^2 (\log\log n)^{3/2}/\sqrt{\log n})$ time 
on a pointer machine.
\end{theorem}
\begin{proof}
Set $\tau = \lambda\sqrt{\log n/\log\log n}$, for 
a sufficiently small constant $\lambda>0$. 
The theorem follows by applying Lemmas~\ref{lem:data-struct},~\ref{lem:incomplete}, 
and~\ref{lem:procFSD} in sequence.
\end{proof}

\section{Improved Bound on the Word RAM}\label{sec:wordram}

We now explain how the running time of our algorithm 
can be improved if our computational model allows for
constant time table-lookup. We use the same $\tau$ as 
above (up to a constant factor). However, we change a 
number of things. ``Signatures'' are represented 
differently and the data structure to obtain combinatorial 
reachability structures is changed accordingly.
Furthermore, we aggregate elementary boxes into 
\emph{clusters} and determine ``incomplete door-orders'' 
for multiple boxes at the same time.  Finally, we walk 
the free-space diagram based on the clusters to decide 
$d_F(P,Q) \leq 1$.

\begin{figure}[t]
\centering
\includegraphics{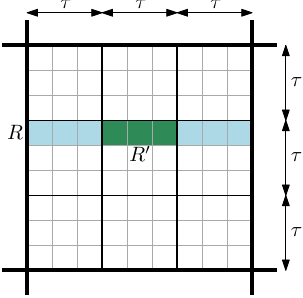}
\caption{A cluster consists of $\tau \times \tau$ 
elementary boxes, thus of $\tau^2 \times \tau^2$ cells. 
A row $R$ and its corresponding $R'$ for the central 
elementary box are indicated.}
\label{fig:cluster}
\end{figure}

\paragraph{Clusters and extended signatures}
We introduce a second level of aggregation in 
the free-space diagram (see Figure~\ref{fig:cluster}): 
a \emph{cluster} is a collection of $\tau \times \tau$ 
elementary boxes, that is, $\tau^2 \times \tau^2$ 
cells in $\FSD(P,Q)$. Let $R$ be a row of cells 
in $\FSD(P,Q)$ of a certain cluster. As before, 
the row $R$ corresponds to an edge $e$ on $Q$ 
and a subcurve $P'$ of $P$ with $\tau^2$ edges.
We associate with $R$ an ordered set 
$Z = \langle e_0, z_0', z_1,  z_1', z_2, z_2', 
\dots, z_{k}, z_{k}', e_1 \rangle$ with 
$2 \cdot k + 3$ elements. Here $k$ is the 
number of intersections of $e$ with the unit 
circles centered at the $\tau^2$ vertices of 
$P'$ (all but the very first). Hence, $k$ is 
bounded by $2 \tau^2$ and $|Z|$ is bounded by 
$4 \tau^2 + 3$. The order of $Z$ indicates the 
order of these intersections with $e$ directed 
along $Q$. Elements $e_0$ and $e_1$ represent 
the endpoints of $e$ and take a special role.
In particular, these are used to represent 
closed doors and snap open doors to the edge 
$e$. The elements $z_i'$ are placeholders for 
the positions of the endpoints of the reach-doors: 
$z_0'$ represents a possible reach-door endpoint 
between $e_0$ and $z_1$, the element $z_1'$ 
represents an endpoint between $z_1$ and $z_2$, etc.

Consider a row $R'$ of an elementary box inside 
the row $R$ of a cluster, corresponding to an 
edge $e$ of $Q$. The \emph{door-index} of $R'$ 
is an ordered set $\langle s_0, t_0, \ldots, 
s_\tau, t_\tau \rangle$ of size $2 \tau + 2$.  
Similar to a door-order, elements $s_0$ and $t_0$ 
represent the reach-door at the leftmost 
boundary of $R'$; the elements $s_i$ and 
$t_i$ ($1 \leq i \leq \tau$) represent the 
door at the right boundary of the $i$\textsuperscript{th} 
cell in $R'$. However, instead of rearranging 
the set to indicate relative positions, the 
elements $s_i$ and $t_i$ simply refer to 
elements in $Z$. If the door is open, they refer 
to the corresponding intersections with $e$ 
(possibly snapped to $e_0$ or $e_1$).  
If the door is closed, $s_i$ is set to $e_1$ 
and $t_i$ is set to $e_0$. The elements $s_0$ and 
$t_0$ are special, representing 
the reach-door, and they refer to one of 
the elements $z_i'$. An \emph{incomplete door-index} 
is a door-index without $s_0$ and $t_0$. 
The advantage of a door-index over a door-order 
is that the reach-door is always 
at the start.  Hence, completing an incomplete 
door-index to a full door-index can be done in 
constant time. Since a door-index has size 
$2 \tau + 2$, the number of possible door-indices 
for $R'$ is $\tau^{O(\tau)}$.

We define the door-indices for the columns 
analogously. We concatenate the door-indices 
for the rows and the columns to obtain the 
\emph{indexed signature} for an elementary 
box. Similarly, we define the \emph{incomplete 
indexed signature}.  The total number of 
possible indexed signatures remains 
$\tau^{O(\tau^2)}$.

For each possible incomplete indexed signature 
$\Sigma$ we build a lookup table $T_\Sigma$ 
as follows: the input is a word with 
$4\tau$ fields of $O(\log \tau)$ bits each.
Each field stores the positions in $Z$ of 
the endpoints of the ingoing reach-doors 
for the elementary box: $2\tau$ fields for 
the left side, $2\tau$ fields for the lower 
side. The output consists of a word that 
represents the indices for the elements in 
$Z$ that represent the outgoing reach-doors 
for the upper and right boundary of the box.
Thus, the input of $T_\Sigma$ is a word of 
$O(\tau \log \tau)$ bits, and $T_\Sigma$ 
has size $\tau^{O(\tau)}$. Hence, for all 
incomplete indexed signatures combined, the 
size is $\tau^{O(\tau^2)} = o(n)$ by our 
choice of $\tau$.

\paragraph{Preprocessing a given input}
During the preprocessing for a given input 
$P,Q$, we use \emph{superstrips} consisting 
of $\tau$ strips. That is, a superstrip 
is a column of clusters and consists of 
$\tau^2$ columns of the free-space diagram.
Lemma~\ref{lem:preprocess_input} still 
holds, albeit with a larger constant $c$ 
in place of $6$. The data structure gets 
as input a query edge $e$, and it returns 
in $O(\log \tau)$ time a word that contains 
$\tau$ fields. Each field represents the 
incomplete door-index for $e$ in the 
corresponding elementary box and thus 
consists of $O(\tau \log \tau)$ bits. Hence, 
the word size is $O(\tau^2 \log \tau) = O(\log n)$ 
by our choice of $\tau$. Thus, the total 
time for building a data structure for 
each superstrip and for processing all rows is
$O(n/\tau^2\;(\tau^c + n \log \tau)) = O(n^2(\log \tau)/\tau^2)$.
We now have parts of the incomplete indexed signature 
for each elementary box packed into different words.
To obtain the incomplete indexed signature, 
we need to rearrange the information such 
that the incomplete door-indices of the rows 
in one elementary box are in a single word.
This corresponds to computing a transpose 
of a matrix, as is illustrated in 
Figure~\ref{fig:transposition}.  For this, we 
need the following lemma, which can be 
found---in slightly different form---in Thorup~\cite[Lemma~9]{Thorup02}.

\begin{figure*}
\centering
\includegraphics{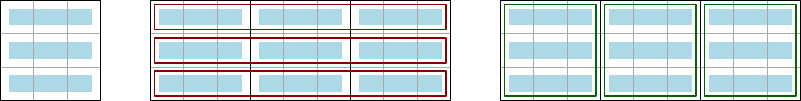}
\caption{(left) Every field represents 
the incomplete door-index of a row in an 
elementary box. (center) The fields 
are grouped into words per row in a 
cluster. (right) Transposition yields 
the desired organization, where a word 
represents the incomplete door-index of 
the rows in an elementary box.}
\label{fig:transposition}
\end{figure*}

\begin{lemma}\label{lem:transpose}
Let $X$ be a sequence of $\tau$ words 
that contain $\tau$ fields each, so 
that $X$ can be interpreted as a 
$\tau \times \tau$ matrix. Then we 
can compute in time $O(\tau \log \tau)$ 
on a word RAM a sequence $Y$ of $\tau$ 
words with $\tau$ fields each that 
represents the \emph{transpose} of $X$.
\end{lemma}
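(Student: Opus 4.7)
The plan is to apply the classical word-parallel matrix transposition by recursive doubling, treating each field as an atomic unit that moves between words. I would view the $\tau \times \tau$ field matrix as a nested $2 \times 2$ block structure with $\log \tau$ levels of nesting, and perform one pass per level. The invariant maintained after level $k$ is that every $2^k \times 2^k$ sub-block has been transposed in place.

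At level $k$ (for $k = 0, 1, \ldots, \log \tau - 1$), I would complete the transposition of every $2^{k+1} \times 2^{k+1}$ super-block by swapping its upper-right $2^k \times 2^k$ sub-block with its lower-left $2^k \times 2^k$ sub-block. Concretely, for every pair of rows $(r, r + 2^k)$ lying inside the same super-block, I would extract the ``right'' fields of word $r$ and the ``left'' fields of word $r + 2^k$ using two precomputed level-$k$ masks, shift the extracted pieces by $2^k$ field widths, and recombine them into the opposite word with a constant number of AND/OR/XOR/SHIFT operations. This handles one row pair in $O(1)$ time; since there are $\tau/2$ such pairs per level, each of the $\log \tau$ levels costs $O(\tau)$, yielding the claimed $O(\tau \log \tau)$ bound.

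The $O(\log \tau)$ pairs of masks themselves can be precomputed in $O(\tau)$ time by doubling: the level-$(k{+}1)$ masks are obtained from the level-$k$ masks by a constant number of shifts and OR's. By the choice $\tau = \Theta(\sqrt{\log n/\log\log n})$, an entire row, together with any of these masks, fits into a single word of $\Theta(\log n)$ bits, so every primitive invoked is a legal $O(1)$-time word-RAM operation.

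The main obstacle is purely bookkeeping: I need to verify that the $\tau$ field boundaries inside a word line up exactly with the $2^k$-field block boundaries used in the recursion. Assuming $\tau$ is a power of two (otherwise we pad up to the next one, losing only a constant factor) this alignment is automatic, but the precise shift amounts and mask patterns at each level have to be chosen carefully to avoid off-by-one errors.
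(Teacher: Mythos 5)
Your proof is correct and matches the paper's approach in substance: both are the classic $O(\tau\log\tau)$ word‑RAM block transpose that performs $O(\tau)$ masked swap operations at each of $\log\tau$ block scales. The paper phrases it as a top‑down recursion on the number of rows (generalized to a sequence of $b$ sub‑matrices packed into the words, so the recurrence $T(a)=2T(a/2)+O(a)$ applies), while you present the same computation bottom‑up and in place, but the per‑level work and the overall bound are identical.
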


\begin{proof}
The algorithm is recursive and solves 
a more general problem: let $X$ be a 
sequence of $a$ words that represents a
sequence $M$ of $b$ different 
$a\times a$ matrices, such that the
$i$\textsuperscript{th} word in $X$ 
contains the fields of the $i$\textsuperscript{th} 
row of each matrix in $M$ from left to right.  
Compute a sequence of words $Y$ that 
represents the sequence $M'$ of the 
transposed matrices in $M$.

The recursion works as follows: if $a=1$, 
there is nothing to be done. Otherwise, 
we split $X$ into the sequence $X_1$ of 
the first $a/2$ words and the sequence 
$X_2$ of the remaining words. $X_1$ and $X_2$ 
now represent a sequence of $2b$ $(a/2) \times (a/2)$ 
matrices, which we transpose recursively. 
After the recursion, we put the $(a/2) \times (a/2)$ 
submatrices back together in the obvious way. 
To finish, we need to transpose the
off-diagonal submatrices. This can be done 
simultaneously for all matrices in time $O(a)$, 
by using appropriate bit-operations (or 
table lookup). Hence, the running time 
obeys a recursion of the form $T(a) = 2T(a/2) + O(a)$,
giving $T(a) = O(a \log a)$, as desired.
\end{proof}

By applying the lemma to the words that 
represent $\tau$ consecutive rows in
a superstrip, we obtain the incomplete door-indices 
of the rows for each elementary box.
This takes total time proportional to
\[
  \frac{n}{\tau^2} \cdot \frac{n}{ \tau} \cdot \tau \log \tau = 
  \frac{n^2}{\tau^{2}} \log \tau.
\]
We repeat this procedure for the horizontal 
superstrips. By using an appropriate lookup table
to combine the incomplete door-indices of the 
rows and columns, we obtain the incomplete 
indexed signature for each elementary box 
in total time $O(n^2(\log \tau)/\tau^2)$.

\paragraph{The actual computation}
We traverse the free-space diagram cluster 
by cluster (recall that a cluster consists 
of $\tau \times \tau$ elementary boxes).
The clusters are processed column by column 
from left to right, and inside each column 
from bottom to top. Before processing a 
cluster, we walk along the left and lower 
boundary of the cluster to determine the 
incoming reach-doors. This is done by 
performing a binary search for each box 
on the boundary, and determining the 
appropriate elements $z_i'$  which 
correspond to the incoming reach-doors.
Using this information, we assemble the 
appropriate words that represent the 
incoming information for each elementary 
box. Since there are $n^2/\tau^4$ clusters, 
this step requires time 
$O((n^2/\tau^4)\tau^2\log \tau) = O(n^2 (\log \tau) / \tau^2)$.
We then process the elementary boxes inside 
the cluster, in a similar fashion.
Now, however, we can process each elementary 
box in constant time through a
single table lookup, so the total time is $O(n^2/\tau^2)$.
Hence, the total running time of our algorithm is 
$O(n^2(\log \tau)/\tau^2)$.
By our choice of 
$\tau = \lambda\sqrt{\log n/ \log\log n}$ for 
a sufficiently
small $\lambda>0$, we obtain the following theorem.

\begin{theorem}\label{thm:decram}
The decision version of the \frechet 
problem can be solved in $O(n^2 (\log\log n)^{2}/\log n)$ 
time on a word RAM.
\end{theorem}

\section{Computing the \frechet Distance}
\label{sec:compute}

The optimization version of the \frechet problem,
i.e., computing the \frechet distance, can be
done in $O(n^2 \log n)$ time using parametric search
with the decision version as a subroutine~\cite{AltGo95}.
We showed that the decision problem can be solved in
$o(n^2)$ time. However, this does not directly yield
a faster algorithm for the optimization problem: if
the running time of the decision problem is $T(n)$,
parametric search gives an $O((T(n)+n^2) \log n)$
time algorithm~\cite{AltGo95}. There is an alternative
randomized algorithm by Raichel and
Har-Peled~\cite{HarPeledRa14}. Their algorithm also
needs $O((T(n)+n^2) \log n)$ time, but
below we adapt it to obtain the following lemma.

\begin{lemma}\label{lem:opt}
The \frechet distance of two polygonal curves
with $n$ vertices each can be computed by a
randomized algorithm in
$O(n^2 2^{\alpha(n)} + T(n) \log n)$ expected
time, where $T(n)$ is the time for the
decision problem.
\end{lemma}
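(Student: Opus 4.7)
The plan is to adapt the Har-Peled--Raichel randomized framework so that the overhead beyond the decision procedure is close to $O(n^2)$ rather than $O(n^2 \log n)$. I would start by recalling that the Fr\'echet distance is one of a canonical set of critical values, each determined by a vertex-vertex, vertex-edge, or vertex-vertex-edge configuration. The total number of these is $\Theta(n^3)$, but, by padding the vertex-vertex-edge family with all triples (even those whose bisector misses the edge), I can sample a uniformly random critical value in constant time.

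Next, I would draw a random sample $S$ of $K = 4n^2$ critical values and isolate the atomic interval $[a,b]$ of $S$ that contains $d_F(P,Q)$. The obvious implementation sorts $S$ and binary searches, which costs an extra $\log K$ factor times $T(n)$. Instead, I would use repeated linear-time median selection on $S$ together with decision queries, reducing this step to $O(K + T(n) \log K) = O(n^2 + T(n) \log n)$. A standard calculation (as in Har-Peled--Raichel, Lemma~6.2) shows that with probability at least $1 - 1/n^c$ the number $K'$ of unsampled critical values inside $[a,b]$ is $O(n \log n)$, and deterministically $K' = O(n^3)$, so the rare bad event contributes only $o(1)$ to the expected time.

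The heart of the argument is then enumerating the $K'$ critical values in $[a,b]$ fast enough. Vertex-vertex and vertex-edge candidates can be filtered in $O(n^2)$ time by direct distance computations. For the vertex-vertex-edge ones, I would run, for each edge $e$ of $P$, a radial sweep: grow circles of radius $r$ centered at the vertices of $Q$ from $r = a$ to $r = b$, maintaining the cyclic order in which the circle arcs cross $e$. The relevant event types (two arcs meeting on $e$, an arc first touching $e$, an arc passing through an endpoint of $e$) exactly coincide with the vertex-vertex-edge critical values and with the already-handled vertex-edge and vertex-vertex values, so the total number of events per edge is $O(K'/n)$ and the total sweep cost (beyond initialization) is $O(K' \log n)$. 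Repeating for edges of $Q$ adds only a constant factor.

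The main obstacle, and where I would spend the most care, is the initialization: I need the initial order in which the radius-$a$ arcs cross the line $\ell$ spanned by $e$. I would compute the arrangement $\mathcal{A}$ of the $n$ circles of radius $a$ centered at vertices of $Q$ in $O(n^2)$ time via Chazelle--Lee, then extract the order by traversing the zone of $\ell$ in $\mathcal{A}$. Since any two of these circles intersect at most twice and $\ell$ meets each circle at most twice, the zone complexity is bounded by the Davenport--Schinzel machinery as $O(n\, 2^{\alpha(n)})$. Summed over all edges of $P$ and $Q$, initialization contributes $O(n^2\, 2^{\alpha(n)})$. Combining the terms, a final median-based selection on the $K'$ candidate values using $O(\log K')$ decision queries drives the running time to $O(n^2\, 2^{\alpha(n)} + T(n) \log n + K' \log n)$ expected, which, after absorbing the low-probability large-$K'$ branch, yields the claimed bound.
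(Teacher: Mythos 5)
Your proposal reproduces the paper's proof essentially verbatim: the same $K=4n^2$ random sample with padded vertex-vertex-edge triples, the same median-selection trick to avoid the sorting overhead, the same invocation of Har-Peled--Raichel Lemma~6.2, the same radial circle sweep per edge, and the same $O(n^2\,2^{\alpha(n)})$ initialization via the Chazelle--Lee arrangement and the zone-complexity bound from Davenport--Schinzel theory. The only small slip is the claim that each edge sees $O(K'/n)$ events---nothing forces the events to be spread evenly across edges---but since all you need (and what the paper actually argues) is that the \emph{total} event count over all edges is $O(K')$, this does not affect the bound.
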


Before we prove the lemma, we recall that
possible values of the \frechet distance
are limited to a certain set of
\emph{critical values}~\cite{AltGo95}:
\begin{enumerate}\setlength{\itemsep}{-3pt}
\item the distance between a vertex of one
curve and a vertex of the other curve
(\textbf{vertex-vertex});
\item the distance between a vertex of one
curve and an edge of the other curve
(\textbf{vertex-edge}); and
\item for two vertices of one curve and an
edge of the other curve, the distance
between one of the vertices and the
intersection of $e$ with the bisector of the
two vertices (if this intersection exists)
(\textbf{vertex-vertex-edge}).
\end{enumerate}

If we also include vertex-vertex-edge
tuples with no intersection, we can 
sample a critical value uniformly at 
random in constant time. The algorithm 
now works as follows (see Har-Peled 
and Raichel~\cite{HarPeledRa14} for 
more details): first, we sample a 
set $S$ of $K=4n^2$ critical values 
uniformly at random. Next, we find 
$a', b' \in S$ such that the \frechet 
distance lies between $a'$ and $b'$ and 
such that $[a', b']$ contains no other
value from $S$. In the original 
algorithm this is done by sorting $S$ 
and performing a binary search using 
the decision version. Using 
median-finding instead, this step 
can be done in $O(K + T(n) \log K)$ 
time. Alternatively, the running time 
of this step could be reduced by 
picking a smaller $K$. However, this 
does not improve the final bound, 
since it is dominated by a $O(n^2 2^{\alpha(n)})$ 
term. The interval $[a',b']$ with high 
probability contains only a small 
number of the remaining critical values.
More precisely, for $K=4n^2$ the 
probability that $[a', b']$ has more 
than $2 c n \ln n$ critical values is
at most $1/n^c$~\cite[Lemma 6.2]{HarPeledRa14}.

The remainder of the algorithm proceeds as
follows: first, we find all critical values of
type vertex-vertex and vertex-edge that lie
inside the interval $[a', b']$. This can
be done in $O(n^2)$ time by checking all 
vertex-vertex and vertex-edge pairs.
Among these values, we again use median-finding
to determine the interval $[a, b] \subseteq [a',b']$ 
that contains
the \frechet distance in $O(K' + T(n) \log K')$
time.  It remains to
determine the critical values corresponding 
to vertex-vertex-edge tuples that lie
in $[a, b]$. 

For this, take an edge $e$ 
of $P$ and the vertices of $Q$. 
Conceptually, we
start with circles of radius
$a$ around the vertices of $Q$, and we 
increase the radii until $b$. 
During this process, we observe
the evolution of the intersection 
points between the circle arcs and $e$. 
Because all vertex-vertex and 
vertex-edge events have been eliminated,
each circle intersects $e$ in either $0$
or $2$ points, and this does not change
throughout the process.
A critical value of 
vertex-vertex-edge type corresponds to the 
event that two different circles intersect 
$e$ in the same point, i.e., that two
intersection points meet while 
growing the circles.
Two intersection points can meet at most
once, and when they do, they exchange their
order along $e$.

This suggests the following algorithm:
let $\mathcal{A}_a$ be the arrangement of
circles with radius $a$ around the vertices
of $Q$, and let $\mathcal{A}_b$ be the concentric
arrangement of circles with radius $b$. 
We determine the 
ordered sequence $I_a$ of the
intersection points of the circles 
in $\mathcal{A}_a$ with $e$,
and we number them in their order along $e$.
Next, we find the ordered sequence 
of intersection points $I_b$ between
$e$ and the circles in  $\mathcal{A}_b$.
We assign to each point in $I_b$ the
number of the corresponding intersection 
points in $I_a$. Since $|I_a| = |I_b|$, 
this gives a permutation  of 
$\{1, \dots, |I_a|\}$,
Two intersection points change
their order from $I_a$ to $I_b$ 
exactly if there is a 
vertex-vertex-edge event in $[a,b]$,
so these events correspond to the inversions
of the resulting permutation.
Given that there are $k$ such inversions,
we can find them in
time $O(|I_a| + k)$ using insertion sort.
Thus, the overall running time 
to find the critical events in $[a,b]$,
ignoring the time
for computing $I_a$ and $I_b$, is $O(n^2 + K')$.

It remains to show that we can quickly find 
$I_a$ and $I_b$. We describe the algorithm for
$I_a$. 
First, compute the arrangement $\mathcal{A}_a$ of circles
with radius $a$ around the vertices of $Q$. This takes
$O(n^2)$ time~\cite{ChazelleLe86}.
To find the intersection order,
traverse in $\mathcal{A}_a$ the zone
of the line $\ell$ spanned by $e$. The time for the
traversal is bounded by the complexity of the zone.
Since the circles pairwise intersect at most twice
and $\ell$ intersects each circle only twice,
the complexity of the zone is
$O(n 2^{\alpha(n)})$~\cite[Theorem 5.11]{SharirAg95}.
Summing over all edges $e$, this adds a total of
$O(n^2 2^{\alpha(n)})$ to the running time.
To find $I_b$, we proceed similarly with $\mathcal{A}_b$.
Thus the overall time is
$O(T(n) \log(n) + n^2 2^{\alpha(n)} + K')$.
The event $K' > 8 n \ln n$ has probability less than
$1/n^4$, and we always have $K' = O(n^3)$. Thus,
this case adds $o(1)$ to the expected running time.
Given $K' \leq 8 n \ln n$, the running time
is $O(n \log n)$.
Lemma~\ref{lem:opt} follows.
Theorem~\ref{thm:opt} now results from
Lemma~\ref{lem:opt}, Theorem~\ref{thm:decpointer}, and
Theorem~\ref{thm:decram}.

\begin{theorem}\label{thm:opt}
The \frechet distance of two polygonal 
curves with $n$ edges each can be 
computed by a randomized algorithm in
time $O(n^2 \sqrt{\log n}(\log\log n)^{3/2})$ 
on a pointer machine and in time 
$O(n^2(\log\log n)^2)$ on a word RAM.
\end{theorem}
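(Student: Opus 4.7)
The plan is immediate: this theorem is a direct corollary obtained by plugging the two decision-problem running times we have already established into Lemma~\ref{lem:opt}. No new algorithmic ideas are needed, so the proof amounts to a short bookkeeping calculation, and I do not anticipate any real obstacle---the only thing to check is that the $n^2 2^{\alpha(n)}$ additive term is dominated by $T(n) \log n$ in both computational models.

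First I would invoke Lemma~\ref{lem:opt}, which says that the \frechet{} distance can be computed by a randomized algorithm in expected time $O(n^2 2^{\alpha(n)} + T(n) \log n)$, where $T(n)$ is the running time for the decision problem in the given computational model.

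For the pointer machine, I would substitute the bound from the first theorem of Section~\ref{sec:procFSD}, namely $T(n) = O(n^2 (\log\log n)^{3/2}/\sqrt{\log n})$. Then
\[
T(n) \log n = O\!\left(n^2 (\log\log n)^{3/2} \sqrt{\log n}\right),
\]
which dominates the $n^2 2^{\alpha(n)}$ term since $2^{\alpha(n)} = o(\sqrt{\log n})$. This gives the claimed pointer-machine bound. For the word RAM, I would substitute $T(n) = O(n^2 (\log\log n)^2/\log n)$ from the second theorem of that section, obtaining
\[
T(n) \log n = O\!\left(n^2 (\log\log n)^2\right),
\]
and again the $n^2 2^{\alpha(n)}$ term is dominated since $2^{\alpha(n)} = o((\log\log n)^2)$. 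Combining both calculations yields the theorem.

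The only subtlety worth flagging in the write-up is that the randomization and the expectation come entirely from Lemma~\ref{lem:opt}; the decision-problem subroutines used as $T(n)$ are deterministic in both models, so no extra probabilistic argument is required here.
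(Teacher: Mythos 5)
Your proposal is correct and matches the paper's argument exactly: the paper's proof of this theorem is the one-line remark ``We plug in our new bounds on $T(n)$'' into Lemma~\ref{lem:opt}, and you simply carry out that substitution explicitly, including the (correct, and indeed easy) check that $n^2 2^{\alpha(n)}$ is dominated by $T(n)\log n$ in both models.
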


\section{Discrete \frechet Distance on the Pointer Machine}
\label{sec:discrete}

As mentioned in the introduction, 
Agarwal~\etal~\cite{AgarwalBAKaSh14}
give a subquadratic algorithm for finding
the \emph{discrete} \frechet distance
between two point sequences, 
using the word RAM. 
In this section, we explain
how their algorithm for the decision version
of the problem can be adapted to the pointer
machine. This shows that, at least for
the decision version, the speed-up does not
come from bit-manipulation tricks but from
a deeper understanding of the underlying 
geometric structure.
Our presentation is slightly different from
Agarwal~\etal~\cite{AgarwalBAKaSh14}, in order
to allow for a clearer comparison with our
continuous algorithm. 

We recall the problem definition:
we are given
two sequences $P = \langle p_1, p_2, 
\dots, p_n\rangle$ and $Q = \langle q_1, q_2, 
\dots, q_n\rangle$ of $n$ points in
the plane. For $\delta > 0$,
we define a directed graph $G_\delta$ with 
vertex set $P \times Q$. In $G_\delta$,
there is an edge between two vertices 
$(p_i, q_j)$, $(p_i, q_{j+1})$ if and only if
both $d(p_i, q_j) \leq \delta$ and 
$d(p_i, q_{j+1}) \leq \delta$. The 
condition is similar for an edge between 
vertices $(p_i, q_j)$ and $(p_{i+1}, q_j)$, 
and vertices $(p_i, q_j)$ and $(p_{i+1}, q_{j+1})$.  
There are no further edges in $G_\delta$. 
The \emph{discrete \frechet distance} 
between $P$ and $Q$ is the smallest 
$\delta$ for which $G_\delta$  has a 
path from $(p_1, q_1)$ to $(p_n, q_n)$.
In the decision version of the problem,
we are given $\delta > 0$, and we need to
decide whether there is a path from 
$(p_1, q_1)$ to $(p_n, q_n)$ in $G_\delta$.

\begin{figure}
\centering
\includegraphics{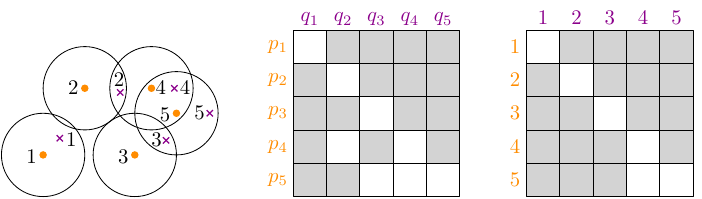}
\caption{The discrete \frechet distance:
(left) two point sequences $P$ (disks) and $Q$
(crosses) with $5$ points each;
(middle) the associated free-space matrix $F$ ($\text{white }= 1,
\text{gray } = 0$);
(right) the resulting reachability matrix
$M$. Since $M_{55} = 1$, the discrete
\frechet distance is
at most $1$.}
\label{fig:discfrechet}
\end{figure}

We now describe a subquadratic pointer machine
algorithm for the decision version. Thus,
let point sequences $P$, $Q$ be given, and
suppose without loss of generality that $\delta = 1$.
The discrete analogue of
the free-space diagram is an 
$n \times n$ Boolean matrix $F$
where $F_{ij} = 1$, if $d(p_i, q_j) \leq 1$,
and $F_{ij} = 0$, otherwise, for $i, j = 1, \dots, n$.
We call $F$ the \emph{free-space matrix}.
Similarly, the discrete analogue of the
reachable region is an $n \times n$
Boolean matrix $M$ that is defined recursively 
as follows: $M_{11} = F_{11}$, and
for $i,j = 1, \dots, n$, $(i,j) \neq 1$,
we have $M_{ij} = 1$ if and only if
$F_{ij} = 1$ and at least one of
$M_{i-1,j}$, $M_{i,j-1}$ or $M_{i-1, j-1}$
equals $1$ (we set $M_{i,0} = M_{0, j} = 0$, for
$i, j = 1, \dots n$).
Then the discrete \frechet distance between
$P$ and $Q$ is at most $1$ if and only
if  $M_{nn} = 1$. We call $M$ the 
\emph{reachability matrix};
see Figure~\ref{fig:discfrechet}.

Adapting the method of Agarwal~\etal~\cite{AgarwalBAKaSh14},
we show how to use preprocessing and table lookup in
order to decide whether $M_{nn} = 1$ in $o(n^2)$ steps on
a pointer machine.
Let $\tau = \lambda \log n$, for a suitable constant $\lambda > 0$.
We subdivide the rows of $M$ into $k = O(n/\tau)$ \emph{strips},
each consisting of $\tau$ consecutive rows: the first strip
$L_1$ consists of rows $1, \dots, \tau$, the second strip $L_2$ 
consists of rows $\tau + 1, \dots, 2\tau$, and so on.
Each strip $L_i$, $i = 1,\dots, k$,  corresponds 
to a contiguous subsequence $P_i$
of $\tau$ points on $P$. Let $\mathcal{A}_i$ be the arrangement of 
disks obtained by drawing a unit disk around each vertex in $P_i$.
The arrangement $\mathcal{A}_i$ has $O(\tau^2)$ faces.

Next, let $\rho = \lambda \log n/\log\log n$, with
$\lambda > 0$ as above. We subdivide
each strip $L_i$, $i = 1, \dots, k$ into $l = O(n/\rho)$
\emph{elementary boxes},
each consisting of $\rho$ consecutive columns in $L_i$.
We label the elementary boxes as $B_{ij}$, for $i = 1, \dots, k$
and $j = 1, \dots, l$. As above, an elementary box $B_{ij}$ has
corresponding contiguous subsequences $P_i$ of $\tau$
vertices on $P$ and $Q_j$ of $\rho$ vertices on $Q$.
Now, the \emph{incomplete signature} of an elementary box $B_{ij}$
consists of (i) the index $i$ of the strip that contains it;
and (ii) the sequence $f_1, f_2, \dots, f_\rho$ of
faces in the disk arrangement $\mathcal{A}_i$ that contain the
$\rho$ vertices of $Q_j$, in that order.
The \emph{full signature} of an elementary box $B_{ij}$
consists of its incomplete signature plus a sequence of
$\rho + \tau$ bits, that represent the entries in
the reach matrix $M$ directly above and to the left of $B_{ij}$.
We call these bits the \emph{reach bits}. As in the continuous
case, the information in the full signature suffices 
to determine how the reachability information propagates through
the elementary box; see Figure~\ref{fig:discstrips}.

\begin{figure}
\centering
\includegraphics[scale=0.9]{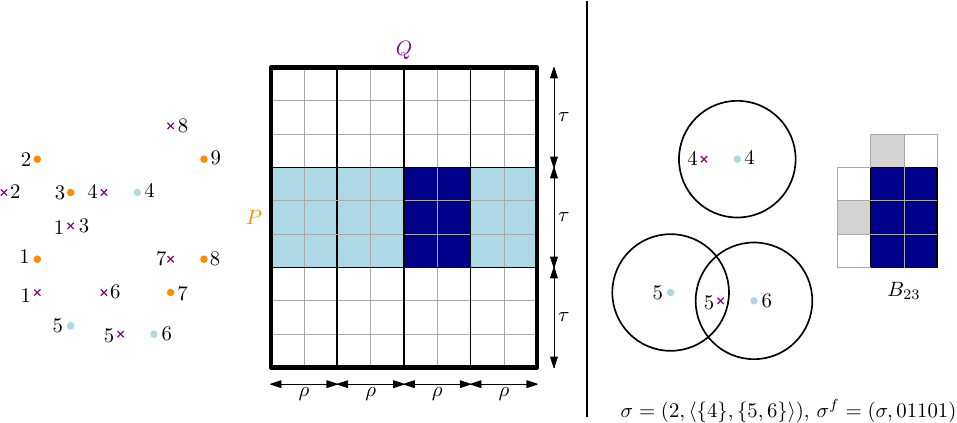}
\caption{(left) We subdivide the reachability matrix $M$
into strips of $\tau$ rows. Each strip is subdivided
into elementary boxes of $\rho$ columns. Each elementary
box corresponds to a subsequence of length $\tau$ on
$P$ and a subsequence of length $\rho$ on $Q$. (right)
The incomplete signature $\sigma$ of an elementary box 
$B_{ij}$ consists of
the index of the containing strip and the face sequence
for $Q_j$ in $\mathcal{A}_i$. The full signature $\sigma^f$
additionally contains $\rho + \tau$ reach bits that
represent the bits in $M$ directly above and to the left
of $B_{ij}$.
}
\label{fig:discstrips}
\end{figure}

The preprocessing phase proceeds as follows:
first, we enumerate all possible incomplete signatures.
For this, we need to compute all strips $L_i$ 
and the corresponding disk arrangements
$\mathcal{A}_i$, for $i = 1, \dots, \tau$. Furthermore, 
we also compute a suitable point location
structure for each $\mathcal{A}_i$. Since there are $O(n/\tau)$ 
strips, each of which consists of $\tau$ rows, this takes time
$O((n/\tau) \cdot \tau^2 \log \tau)$ = $O(n \tau \log \tau)
= O(n \log n \log\log n)$. For each strip $L_i$, 
since $\mathcal{A}_i$ has $O(\tau^2)$ faces, the number
of possible face sequences $f_1, \dots, f_\rho$ is
$\tau^{O(\rho)} \leq n^{1/3}$, by our choice
of $\tau$ and $\rho$ and for $\lambda$ small enough.
Thus, there are $O(n^{4/3}/\log n)$ incomplete signatures,
and they can be enumerated in the same time.
Now, for each incomplete signature 
$\sigma = (i, \langle f_1, \dots, f_\rho\rangle)$
we build a lookup-table that encodes for each
possible setting of the reach bits the resulting 
reach bits at the bottom and the right boundary of 
the elementary box. There are $2^{\rho + \tau} \leq n^{1/3}$
possible settings of the reach bits, by our choice of
$\tau$ and $\rho$ and for $\lambda$ small enough. 
We enumerate
all of them and organize them as a complete binary
tree of depth $\rho + \tau$. For each setting of the reach bits, we use the
information of the incomplete signature to determine 
the result through a straightforward dynamic
programming algorithm~\cite{EiterMannila94,AgarwalBAKaSh14,BringmannMu16}
in $O(\tau \cdot \rho) = O(\log^2 n)$ time, and we store the
result as a linked list of length $\rho + \tau - 1$ at the
leaf for the corresponding reach bits; see Figure~\ref{fig:discloopup}.
Thus, the total time for this part of the preprocessing
phase is $O(n^{5/3}\log n)$.

\begin{figure}
\centering
\includegraphics[scale=0.9]{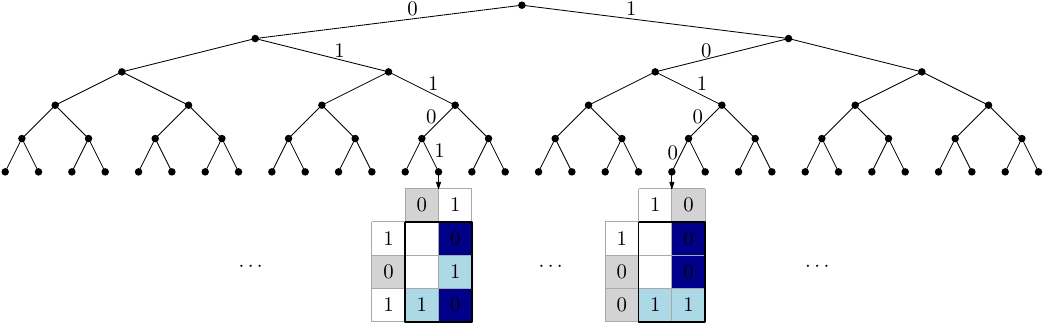}
\caption{For each incomplete signature, we create
a lookup table organized as a complete binary tree.
Each leaf corresponds to a setting of the 
reach bits for the elementary box.
In the leaves, we store a linked list of length
$\rho + \tau -1$ that represents the contents
of the reachability matrix at the bottom and
at the right of the elementary box.}
\label{fig:discloopup}
\end{figure}

Next, we determine for each elementary box $B_{ij}$
its incomplete signature. For this, we use the
point location structure for $\mathcal{A}_i$ to determine
for each vertex in $Q_j$ the face of $\mathcal{A}_i$ that
contains it. There are $O(n^2/\tau \rho)$
elementary boxes, each $Q_j$ has $\rho$
vertices, and one point location query takes
$O(\log \tau)$ time, so the total time for
this step is $O((n^2/\tau\rho)\cdot \rho \cdot \log \tau)
= O((n^2/\log n)\log\log n)$.
Using this information, we can store with each
elementary box a pointer to the lookup table
for the corresponding incomplete signature.

Finally, we can now use the lookup tables to 
propagate the reachability information through
$M$, one elementary box at a time,
as in the continuous case. The time
to process one elementary box
is $O(\rho + \tau)$, because we need to
traverse the corresponding lookup table
to find the reach bits for the adjacent
boxes. Thus, the total running time
is $O((n^2/\tau\rho) \cdot (\rho + \tau))
= O(n^2/\rho) = O((n^2/\log n) \log\log n)$.
Thus, we get the following pointer machine
version of the result by Agarwal~\etal~\cite{AgarwalBAKaSh14}

\begin{theorem}
There is an algorithm that solves
the decision version of the discrete
\frechet problem in $O((n^2/\log n) \log\log n)$
time on a pointer machine.
\end{theorem}

\paragraph{Remark}
Agarwal~\etal~\cite{AgarwalBAKaSh14} further describe
how to get a faster algorithm for the decision version 
by aggregating the elementary boxes into larger
\emph{clusters}, similar to the method 
given in Section~\ref{sec:wordram}. This improved 
algorithm finally leads to a subquadratic algorithm
for computing the discrete \frechet distance.
Unfortunately, as in Section~\ref{sec:wordram}, 
it seems that this improvement crucially
relies on constant time table lookup, so it does not
directly translate to the pointer machine.

The reader may also notice that in this section
we could choose $\tau, \rho \approx \log n$, whereas
in the previous sections we had $\tau \approx \sqrt{\log n}$.
This is due to the slightly different definition of
signature: in the discrete case, once the subsequence
$P_i$ is fixed, there are only $\tau^{O(\rho)}$ possible
ways how the subsequence $Q_j$ might interact with $P_i$.
In the continuous case, this does not seem to be so
clear, and we work with the weaker bound of $\tau^{O(\tau^2)}$
possible interactions.

\section{Decision Trees}

Our results also have implications for 
the decision-tree complexity of the 
\frechet problem. Since in that model 
we account only for comparisons between 
input elements, the preprocessing comes 
for free, and hence the size of the 
elementary boxes can be increased.
Before we consider the continuous 
\frechet problem, we first note that
a similar result can be obtained easily for the
discrete \frechet problem. 

\begin{theorem}\label{thm:actDiscrete}
The \emph{discrete} \frechet problem has
an algebraic computation tree 
of depth $\widetilde{O}(n^{4/3})$.
\end{theorem}

\begin{proof}
First, we consider the decision version:
we are given two sequences $P = p_1, \dots, p_n$
and $Q = q_1, \dots, q_n$ of $n$ points in the plane,
and we would like to decide whether the discrete
\frechet distance between $P$ and $Q$ is at most $1$.
Katz and Sharir~\cite{katz1997expander} showed that we
can compute a representation of the set of pairs
$(p_i, q_j)$ with $\|p_i - q_j\| \leq 1$ in
$\widetilde{O}(n^{4/3})$ steps. This information
suffices to complete the reachability matrix without further comparisons.
As shown by Agarwal~\etal~\cite{AgarwalBAKaSh14}, 
one can then solve the optimization problem at 
the cost of another $O(\log n)$-factor, which is 
absorbed into the $\widetilde{O}$-notation.
\end{proof}

Given our results above, we prove an analogous statement for the
continuous \frechet distance.

\begin{theorem}
There exists an algebraic decision tree 
for the \frechet problem (decision version) 
of depth $O(n^{2 - \eps})$, for a fixed constant $\eps > 0$.
\end{theorem}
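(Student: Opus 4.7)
The plan is to rerun the framework of Sections~\ref{sec:lookup}--\ref{sec:procFSD}, pushing the elementary-box side $\tau$ up to a polynomial in $n$ and exploiting the fact that in the algebraic decision tree model any work that does not compare input coordinates is free. Concretely, I would set $\tau = n^{1/c}$ where $c$ is the constant from Lemma~\ref{lem:preprocess_input}. The look-up tables of Lemma~\ref{lem:data-struct}, which enumerate all $\tau^{O(\tau^{2})}$ possible (partial) signatures together with their combinatorial reachability structures, are purely combinatorial objects; no matter how enormous they become, constructing them does not contribute to the depth of the decision tree.

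Only two classes of comparisons then depend on the actual input coordinates. The first arises in the construction phase of Lemma~\ref{lem:preprocess_input}: for each of the $O(n/\tau)$ vertical $\tau$-strips we must resolve $O(\tau^{c})$ algebraic sign tests among the strip's vertices in order to identify the arrangements $\A$ and $\B$ and their slab subdivision. The second arises in the per-edge query: for every one of the $n$ edges of the opposite curve we perform $O(\log\tau)$ algebraic sign tests to descend the Dobkin--Lipton-style point-location tree. Repeating for horizontal strips, the total is
\[
  O\!\left(\frac{n}{\tau}\,\tau^{c} + \frac{n^{2}}{\tau}\log\tau\right)
  \;=\; O\!\left(n\tau^{c-1} + \frac{n^{2}\log\tau}{\tau}\right)
  \;=\; O\!\left(n^{2-1/c}\log n\right)
\]
input-dependent comparisons.

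Everything after this stage---completing partial signatures via the first stage of Lemma~\ref{lem:data-struct}, extracting a full signature in the second stage, reading off the combinatorial reachability structure of each box, and propagating reach-doors from box to box as in Section~\ref{sec:procFSD}---is a sequence of table lookups on combinatorial objects, together with transmission of symbolic pairs $(s_{u},t_{v})$ across box boundaries, and thus incurs no further sign tests on input coordinates. The point to verify carefully is exactly this last claim, that the reach-door propagation is genuinely comparison-free. It is, because partial signatures of two neighbouring boxes share their common boundary row/column and hence the same labels $s_{i},t_{i}$ there, so the outgoing reach-door of one box is immediately legible as an incoming reach-door of the next. Granted this, the overall depth of the tree is $O(n^{2-1/c}\log n) = O(n^{2-\eps})$ for any fixed $\eps < 1/c$, which yields the theorem.
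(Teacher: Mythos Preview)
Your approach is the paper's: push $\tau$ up to $n^{1/c}$, observe that the table-building of Lemma~\ref{lem:data-struct} costs nothing in the decision-tree model, and count only the input-dependent comparisons. Your bound $O(n\tau^{c-1}+n^{2}(\log\tau)/\tau)=O(n^{2-1/c}\log n)$ is exactly the paper's bound.

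However, the justification in your last paragraph is incorrect. The reach-door propagation of Section~\ref{sec:procFSD} is \emph{not} comparison-free. When the outgoing reach-door of a box is handed to its right neighbour, its lower endpoint is some $s_u$ from the \emph{left} box's row~$j$: the intersection of the unit circle about a vertex of $P$ lying in the left strip with the edge $e_j$ of~$Q$. To insert this point as $s_0$ into the \emph{right} box's partial door-order for row~$j$, you must compare it against that box's door endpoints $s_1,\dots,s_\tau,t_1,\dots,t_\tau$, which arise from circles about vertices of $P$ in a \emph{different} strip. These comparisons were never performed during the per-strip preprocessing of Lemma~\ref{lem:preprocess_input}. (Worse, the carried endpoint $s_u$ need not come from the immediately preceding box at all; by Case~3 of the recursion it may originate arbitrarily far back in the row.) Your sentence that ``neighbouring boxes share their common boundary row/column and hence the same labels $s_i,t_i$'' conflates the global free-space row with the per-box labelling: the rows coincide geometrically, but the labels $s_i,t_i$ in each box refer to that box's own columns.

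Fortunately this does not harm the result. Lemma~\ref{lem:procFSD} already accounts for exactly these binary searches, at $O(\tau\log\tau)$ per box and hence $O(n^{2}(\log\tau)/\tau)$ overall, which is the same order as the term you already carry from Lemma~\ref{lem:partial}. So simply cite Lemmas~\ref{lem:partial} and~\ref{lem:procFSD} together---as the paper does---instead of trying to argue the propagation away.
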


\begin{proof}
We reconsider the steps of our algorithm. 
The only phases that actually involve the 
input are the second preprocessing phase and
the traversal of the elementary boxes.
The reason of our choice for $\tau$ was to 
keep the time for the first preprocessing 
phase small. This is no longer a problem.
By Lemmas~\ref{lem:incomplete} and \ref{lem:procFSD},
the remaining cost is bounded by
$O(n\tau^{5} + n^2 (\log \tau)/\tau)$. 
Choosing $\tau = n^{1/6}$,
we get a decision tree of depth
$n \cdot n^{5/6} + n^{2-1/6}\log n$. This is 
$O(n^{2-(1/6)}\log n) = O(n^{2 - \eps})$,
for any fixed $ 0 < \eps < 1/6$.
\end{proof}

\section{Weak \frechet distance}\label{sec:weak}

The \emph{weak} \frechet distance is a variant
of the \frechet distance where we are allowed
to walk \emph{backwards} along the curves~\cite{AltGo95}. 
More precisely,
let $P$ and $Q$ be two polygonal curves,
each with $n$ edges, and let $\Psi'$ be the set of 
all continuous functions $\alpha\colon [0,1] \rightarrow [0,n]$ 
with $\alpha(0) = 0$ and $\alpha(1) = n$.
The \emph{weak \frechet distance}
between $P$ and $Q$ is defined as
\[
d_\text{wF}(P,Q) \eqdef
\inf_{\alpha, \beta \in \Psi'} 
\max_{x \in [0,1]} \| P(\alpha(x)) - Q(\beta(x)) \|.
\]
Compared to the regular \frechet distance, the set 
$\Psi'$ now also contains non-monotone functions.
The weak \frechet distance was also introduced 
by Alt and Godau~\cite{AltGo95}, who showed
how to compute it in $O(n^2 \log n)$ worst-case time. 
We will now use our framework to
obtain an algorithm that runs in $o(n^2)$ expected
time on a word RAM.

\paragraph{A Decision Algorithm for the Pointer Machine}
As usual, we start with the decision version: 
given two polygonal curves $P$ and $Q$, each with
$n$ edges, decide whether $d_\text{wF}(P, Q) \leq 1$.
This has an easy interpretation 
in terms of the free-space diagram. 
Define an undirected
graph $G = (V, E)$ with vertex set $V = \{0, 1, \dots, n-1\}^2$.
The vertex $(i,j) \in V$ corresponds to the
cell $C(i,j)$, and there is an edge between
two vertices $(i, j)$ and $(i', j')$ if and only
if the two cells $C(i,j)$ and $C(i', j')$ are 
neighboring (i.e., if $|i - i'| + |j - j'| = 1$) and 
the door between them is open. Then, 
$d_\text{wF}(P, Q) \leq 1$ if and only if (i) $|P(0) - Q(0)| \leq 1$;
(ii) $|P(n) - Q(n)| \leq 1$; and (iii) the vertices 
$(0,0)$ and $(n-1, n-1)$ are in
the same connected component of $G$, see Figure~\ref{fig:weak_freespace}.
\begin{figure*}
\centering
\includegraphics{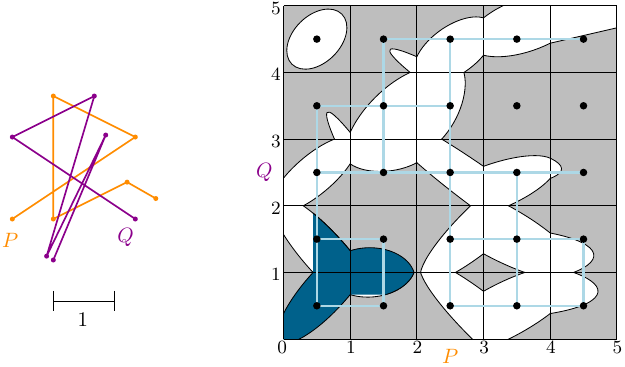}
\caption{The polygonal curves $P$ and $Q$ have
weak \frechet distance at most $1$, but \frechet distance
larger than $1$: the point $(n,n)$ is not in 
$\reach(P, Q)$, but the vertices for the
cells $C(0, 0)$ and $C(4,4)$ are in the same
connected component of $G$. The reachable region is
shown dark blue, the edges if $G$ are shown light blue.  }
\label{fig:weak_freespace}
\end{figure*}

Let $\tau, \rho \in \N$ be 
parameters, to be determined later.
We subdivide the cells into $k = O(n/\tau)$ 
\emph{vertical strips} $L_1, \dots, L_k$, 
each consisting of $\tau$ consecutive 
columns. Each strip $L_i$ corresponds to a subcurve 
$P_i$ of $P$ with $\tau$ edges. For each
such subcurve $P_i$, we define two 
arrangements $\mathcal{A}_i$ and 
$\mathcal{B}_i$. To obtain $\mathcal{A}_i$, 
we take for each edge $e$ of $P_i$ the 
``stadium'' $c_e$ of points with distance exactly $1$ 
from $e$, and we compute the resulting 
arrangement. Since two distinct curves $c_e, c_{e'}$
cross in $O(1)$ points, the complexity
of $\mathcal{A}_i$ is $O(\tau^2)$, see Figure~\ref{fig:weak_arrangement}.
The 
arrangement $\mathcal{B}_i$ is the arrangement
$\mathcal{B}$ described in the proof of
Lemma~\ref{lem:preprocess_input}, i.e.,
the arrangement of the curves dual
to the tangent lines for the unit circles 
around the vertices of $P_i$. 
\begin{figure}
\centering
\includegraphics{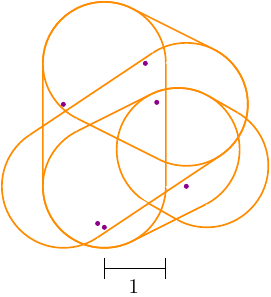}
\caption{The arrangement $\mathcal{A}_i$ for 
a subcurve $P_i$ and the some vertices of $Q$ in it.}
\label{fig:weak_arrangement}
\end{figure}

Next, we subdivide each strip into 
$\ell = O(n/\rho)$ \emph{elementary boxes},
each consisting of $\rho$ consecutive rows. 
We label the elementary boxes as
$B_{ij}$, with $1 \leq i \leq k$, 
$1 \leq j \leq \ell$.  The rows of an 
elementary box $B_{ij}$ correspond to a 
subcurve $Q_j$ of $Q$ with $\rho$ edges. 
The \emph{signature} of $B_{ij}$
consists of (i) the index $i$ of the 
corresponding strip; (ii) for each 
vertex of $Q_j$ the face of 
$\mathcal{A}_i$ that contains it; and 
(ii) for each edge $e$ of $Q_j$ the face of 
$\mathcal{B}_i$ that contains the point 
that is dual to the supporting line of $e$,
plus two indices $a,b \in \{1, \dots, \tau\}$
that indicate the first and the last
unit circle around a vertex of $P_i$ that
$e$ intersects, as we walk from one endpoint to another.

Given an elementary box $B$, the 
\emph{connection graph} $G_B$ of $B$
has $\tau\rho$ vertices, one for each
cell in $B$, and an edge between two
cells $C$, $C'$ of $B$ if and only
if $C$ and $C'$ share a (horizontal
or vertical) edge with an open door.
The \emph{connectivity list} 
of $B$ is a linked list 
with $2\tau + 2\rho - 4$ entries that 
stores for each cell $C$ on the boundary
of $B$ a pointer to a record that 
represents the connected component of 
the connection graph $G_B$ that contains $C$.

\begin{lemma}\label{lem:weak_signature}
There are $O(n\tau^{8\rho + 1})$ different 
signatures.  The connection graph 
of an elementary box $B_{ij}$ depends only on
its signature, and the connectivity list
can be computed in $O(\tau \rho)$ time on 
a pointer machine, given the signature.
\end{lemma}

\begin{proof}
First, we count the signatures. There are 
$O(n/\tau)$ strips. Once the 
strip index $i$ is fixed, a signature 
consists of $\rho+1$ faces of $\mathcal{A}_i$,
$\rho$ faces of $\mathcal{B}_i$, and $\rho$
pairs of indices $a,b \in \{1, \dots, \tau\}$.
The arrangement $\mathcal{A}_i$ has $O(\tau^2)$ 
faces, and the arrangement $\mathcal{B}_i$ has 
$O(\tau^4)$ faces, as explained in the proof 
of Lemma~\ref{lem:preprocess_input}. Finally,
there are $\tau^2$ pairs of indices. Thus,
the number of possible signatures in one strip 
is $\tau^{8\rho + 2}$.
In total, we get $O(n\tau^{8\rho+1})$
signatures. 

Next, the connection graph of an elementary box 
$B_{ij}$ is determined solely by which doors 
are open and which doors are closed. We explain
how to deduce this information from the signature. 
A horizontal edge of $B_{ij}$
corresponds to an edge $e$ of $P_i$ and a
vertex $q$ of $Q_j$. The door is open if and 
only if $q$ has distance at most $1$ from $e$.
This is determined by the face of $\mathcal{A}_i$ 
containing $q$.
Similarly, a vertical edge of $B_{ij}$ corresponds to 
a vertex $p$ of $P_i$ and an edge $e$ of $Q_j$. 
The door is open if and only if $e$ intersects
the unit circle with center $p$.
As in Lemma~\ref{lem:preprocess_input}, this can 
be inferred from the face of $\mathcal{B}_i$
that contains the point dual to the supporting line 
of $e$, together with the indices $(a,b)$ of the first
and last circle intersected by $e$.

Finally, given the signature, we can build the 
connection graph $G_{B_{ij}}$ in $O(\tau\rho)$
time, assuming that the arrangements $\mathcal{A}_i$
and $\mathcal{B}_i$ provide suitable data structures.
With $G_{B_{ij}}$ at hand, the connection list can
be found in $O(\tau\rho)$ steps, using breadth first
search.
\end{proof}

As usual, our strategy now is
to preprocess all possible signatures and to 
determine the signature of each elementary box. 
Using this information, we can then  process the
elementary boxes quickly in our main algorithm.
The next lemma describes the preprocessing steps. 

\begin{lemma}\label{lem:weak_preprocess}
We can determine for each elementary box $B_{ij}$ a 
pointer to its connectivity list in total time 
$O(n\tau^{8\rho + 2}\rho + (n^2/\tau)\log \tau)$ 
on a pointer machine.
\end{lemma}

\begin{proof}
First, we compute the arrangements $\mathcal{A}_i$ and
$\mathcal{B}_i$ for each vertical strip. By 
Lemma~\ref{lem:preprocess_input}, this takes $O(\tau^6)$
steps per strip, for a total of 
$O(\tau^6 \cdot n/\tau) = O(n\tau^5)$.
Then, we enumerate all possible signatures, and we compute
the connectivity list for each of them. By 
Lemma~\ref{lem:weak_signature}, this needs 
$O(n\tau^{8\rho + 2} \rho)$ 
time. We store the signatures and their connectivity lists

Next, we determine the signature for each elementary
box. Fix a strip $L_i$. For each vertex $q$ and each
edge $e$ of $Q$, we determine the containing faces 
of $\mathcal{A}_i$ and $\mathcal{B}_i$ and the pair
$(a,b)$ that represents the first and least 
intersection of $e$ with the unit circles around
the vertices of $P_i$. This takes $O(n\log \tau)$
steps in total, using appropriate point location
structures for the arrangements. Now, we use
the data structure from the preprocessing to 
connect each elementary box to its connectivity list.
A simple pointer-based structure supports one lookup
in $O(\rho\log \tau)$ time. Since there are
$O(n/\rho)$ elementary boxes in one strip, we get
a total running time of $O(n \log \tau)$ per strip.
Since there are $O(n/\tau)$ strips, the resulting
running time is $O((n^2/\tau)\log \tau)$.
\end{proof}

With the information from the preprocessing phase,
we can easily solve the decision problem with a
union-find data structure.

\begin{lemma}\label{lem:weak_processpointer}
Suppose that each elementary box  has a pointer
to its connectivity list. Then we can decide whether
$d_{wF}(P, Q) \leq 1$ in time 
$O\big(\frac{n^2(\tau + \rho)\alpha(n)}{\tau\rho}\big)$ 
on a pointer machine, where $\alpha(\cdot)$ denotes
the inverse Ackermann-function.
\end{lemma}

\begin{proof}
Let $S$ be the set that contains the boundary cells
of all elementary boxes. Then, 
$|S| = O\big(\frac{n^2(\tau + \rho)}{\tau\rho}\big)$.
We create a \textsc{Union}-\textsc{Find} data structure for $S$.
Then, we go through all elementary boxes $B_{ij}$, and for each
$B_{ij}$, we use the connectivity list to connect those subsets 
of boundary cells that are connected inside $B_{ij}$. This
can be done with $O(\tau + \rho)$ \textsc{Union}-operations.
Hence, the total running time for this step is 
$O\big(\frac{n^2(\tau + \rho)}{\tau\rho}\big)$. (We do not need any 
\textsc{Find}-operations yet, because we can store
the representatives of the sets with the representatives of
the connected components in the connectivity list as we walk
along the boundary of an elementary box.)

Next, we iterate over the boundary cells of all elementary
boxes. For each boundary cell $C$, we determine the neighboring
boundary cells on neighboring  elementary boxes that share
an open door with $C$. For each such cell $D$, we perform 
\textsc{Find}-operations on $C$ and $D$, followed by a 
\textsc{Union}-operation.
This takes $O\big(\frac{n^2(\tau + \rho)\alpha(n)}{\tau\rho}\big)$
time~\cite{Tarjan75}. Finally, we return \textsc{True} if and only if
(i) the pairs of start and end vertices of $P$ and $Q$ have 
distance at most $1$, and (ii) $C(0,0)$ and $C(n-1, n-1)$ are
in the same set of the resulting partition of $S$.
The running time follows.
\end{proof}

The following theorem summarizes our algorithm
for the decision problem.

\begin{theorem}\label{thm:weak_pointer}
Let $P$ and $Q$ be two polygonal curves with
$n$ edges each. We can decide whether
$d_{wF}(p, q) \leq 1$ in $O((n^2/\log n)\alpha(n)\log\log n)$ time 
on a pointer machine, where $\alpha(\cdot)$ denotes the
inverse Ackermann function.
\end{theorem}

\begin{proof}
We set $\tau = \log n$ and $\rho = \lambda \log n/\log\log n$,
for a suitable constant $\lambda > 0$. If $\lambda$ is
small enough, then $\tau^{8\rho + 2} = O(n^{4/3})$, and
the algorithm from Lemma~\ref{lem:weak_preprocess} runs in
time $O((n^2/\log n)\log\log n)$. After the preprocessing
is finished, we can use Lemma~\ref{lem:weak_processpointer}
to obtain the final result in time $O((n^2/\log n)\alpha(n)\log\log n)$.
\end{proof}

\paragraph{A Faster Decision Algorithm on the Word RAM}
Theorem~\ref{thm:weak_pointer} is 
too weak to obtain a subquadratic algorithm for the
weak \frechet distance. This requires the full power of the
word RAM.
\begin{theorem}
Let $P$ and $Q$ be two polygonal curves with
$n$ edges each. We can decide whether
$d_{wF}(p, q) \leq 1$ in $O((n^2/\log^2 n)(\log\log n)^5)$ time 
on a word RAM.
\end{theorem}

\begin{proof}
We set $\tau = \lambda \log^2 n / \log\log n$ and 
$\rho = \lambda \log n/ \log\log n$.
If $\lambda$ is small enough, then $\tau^{8\rho + 2} = O(n^{4/3})$, and
we can perform the  algorithm from Lemma~\ref{lem:weak_preprocess} in
time $O((n^2/\log^2 n)\log\log n)$.

We modify the algorithm from Lemma~\ref{lem:weak_preprocess}
slightly. Instead of a pointer-based connectivity list,
we compute  a \emph{packed connectivity list}.
It consists of $O(\log n)$ words 
of $\log n$ bits each. Each word stores $\Theta(\log n / \log \log n)$
\emph{entries} of $O(\log\log n)$ bits. An entry consists of
$O(1)$ \emph{fields}, each with $O(\log \log n)$ bits.
As in Lemma~\ref{lem:weak_preprocess}, the entries in the
packed connectivity list represent the connected components of
the connection graph $G_B$ for the
boundary cells of a given elementary box $B$.
This is done as follows: each entry of the connectivity list
corresponds to a boundary cell $C$ of $B$. In the first field,
we store a unique identifier from $\{1, \dots, 2\tau + 2\rho -4\}$
that identifies $C$. In the second field, we store the smallest 
identifier of any boundary cell of $B$ that lies in the same connected
component as $C$. The remaining fields of the entry are initialized 
to $0$.
Furthermore, we compute for each elementary box $B$ a sequence of
$O((\tau / \log n)\log\log n)$ words that indicates for each
boundary cell of $B$ whether the corresponding door to 
the neighboring elementary box is open. This information can
be obtained in the same time by adapting the algorithm from
Lemma~\ref{lem:weak_preprocess}.

Next, we group the elementary boxes into \emph{clusters}.
A cluster consists of $\log n$ vertically adjacent elementary
boxes from a single strip. The first set of 
clusters come from bottommost $\log n$ elementary boxes,
the second set of clusters from following $\log n$ elementary boxes,
etc. The boundary and
the connectivity list of a cluster are defined analogously as
for an elementary box. Below, in Lemma~\ref{lem:weak_cluster},
we show that we can compute the
(pointer-based) connectivity list of a cluster in time 
$O((\tau + \rho) (\log\log n)^4)$. 
Then, the
lemma follows:  there are $O(n^2/(\tau \rho \log n))$
clusters, so the total time to find connectivity lists
for all clusters is 
$O(n^2/(\rho \log n)(\log\log n)^4 + n^2/(\tau \log n)(\log\log n)^4) = 
O((n^2/\log^2 n) (\log\log n)^{5})$.
After that, we can solve the decision
problem in time
\[
O\left(\frac{n^2(\tau + \rho\log n)\alpha(n)}{\tau\rho\log n}\right) 
= 
O\left(\frac{n^2\alpha(n)\log\log n}{\log^2 n}\right), 
\]
as in Lemma~\ref{lem:weak_processpointer}.
\end{proof}

\begin{lemma}\label{lem:weak_cluster}
Given a cluster, we can compute 
the connectivity list for its boundary
in total time $O((\tau + \rho)(\log\log n)^4)$
on a word RAM.
\end{lemma}

\begin{proof}
Our algorithm makes extensive use of the 
word RAM capabilities. Data is processed in \emph{packed
form}. That is, we usually deal with a sequence of
packed words, each with $O(\log n/\log\log n)$
\emph{entries}. Each entry has $O(1)$ {fields}
of $O(\log\log n)$ bits. We restrict the number of
entries in each word so that the total number of
bits used is at most, say, $(1/3)\log n$. Then, we
can support any reasonable binary operation on two packed
words in $O(1)$ time after $o(n)$ preprocessing time.
Indeed, we only need to
build a lookup-table during the preprocessing phase.
First, we observe that we can \emph{sort} packed data efficiently.
\begin{claim}\label{clm:packed_sort}
Suppose we are given a sequence of $\mu$ packed words, 
representing a sequence of $O(\mu(\log n/\log\log n))$
entries. We can obtain a sequence of $\mu$
packed words that represents the same entries, 
sorted according to any given field, 
in $O(\mu \log \mu)$ time. 
\end{claim}

\begin{proof}
We adapt usual techniques for packed integer 
sorting~\cite{AlbersHa97}.
We precompute a \emph{merge}-operation that receives
two packed words, each with their entries sorted according to a
given field, and returns two packed words that represent the
sorted sequence of all entries. We also precompute an operation that
receives one packed word and returns a packed word with the
same entries, sorted according to a given field. Then,
we can perform a 
merge sort in $O(\mu \log \mu)$ time, because
in each level of the recursion, the total time for merging
is $O(\mu)$. Once we are down to a single 
word, we can sort its entries in one step. More details can
be found in the literature on packed 
sorting~\cite{AlbersHa97,BuchinMu11}.
\end{proof}

\begin{figure}
\centering
\includegraphics[scale=0.5]{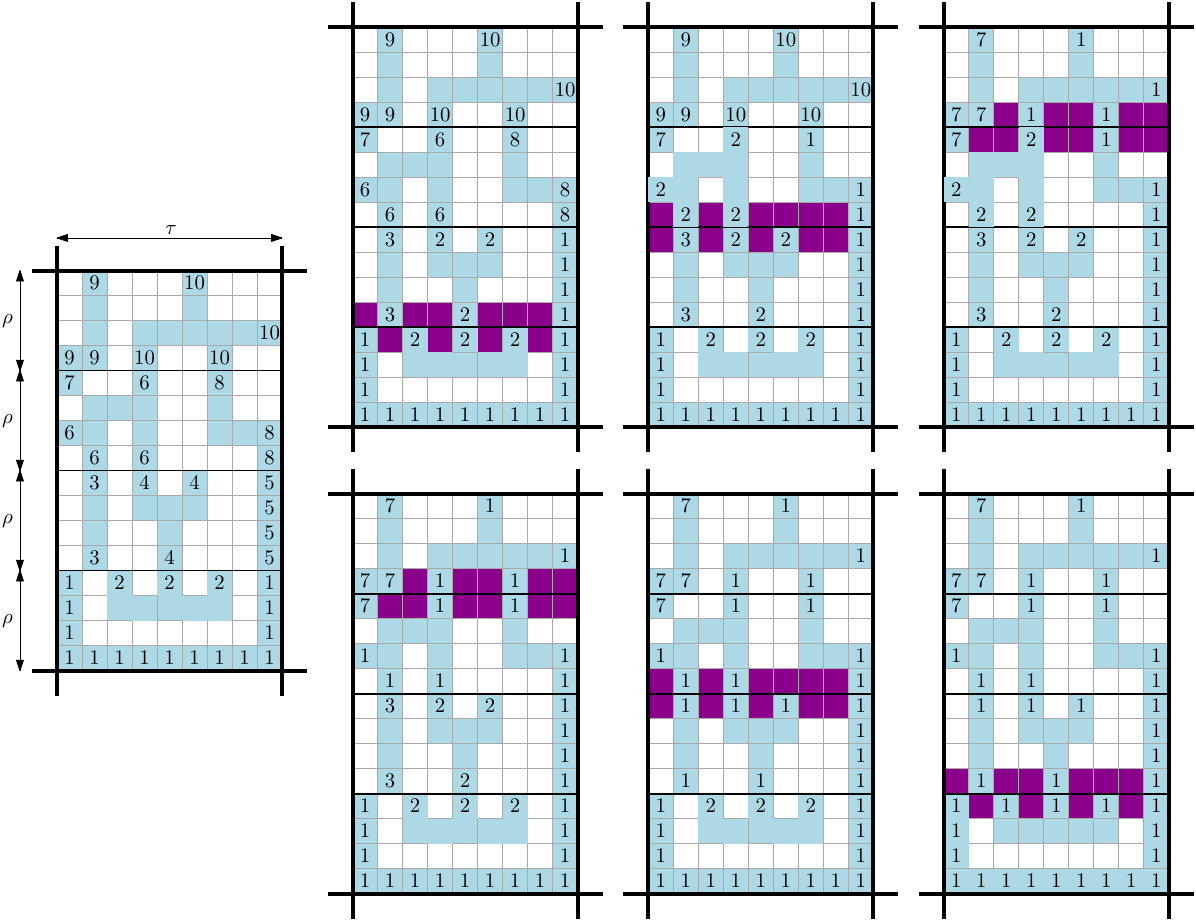}
\caption{The algorithm for computing the connected components
within a cluster. In the beginning, we know the components
inside each elementary box (left). In the first phase (top right), we 
update
the connectivity information from bottom to top, by considering
the upper and lower boundaries of the vertically adjacent elementary
boxes. In the second phase (bottom right), we propagate the connected 
components from top to bottom.}
\label{fig:weak_cluster}
\end{figure}
Now we describe the strategy of the main algorithm.
As explained above,
a cluster consists of $\log n$ vertically adjacent elementary boxes. 
We number the boxes
$B_1, \dots, B_{\log n}$, from bottom to top. For 
$i = 1, \dots, \log n$, we denote by $G_i$ the connection
graph for the boxes $B_1, \dots, B_i$. That is, $G_i$ is obtained
by taking the union $\bigcup_{j = 1}^i G_{B_j}$
of the individual connection graphs and adding 
edges for adjacent cells in neighboring boxes that share an
open door. Our algorithm proceeds
in two \emph{phases}. In the first phase, we propagate the connectivity 
information upwards. 
That is, for $i = 1, \dots, \log n$, we compute
a sequence $W_i$ of $O((\tau + \rho)\log\log n/ \log n)$ packed words. 
Each entry in $W_i$
corresponds to a cell $C$ on the boundary of $B_i$. 
The first field stores a unique identifier for the cell, and the
second entry stores an identifier of the connected component of
$C$ in $G_i$.
In the second phase, we go in the reverse direction. For 
$i = \log n, \dots, 1$,
we compute a sequence $W_i'$ of $O((\tau + \rho)\log\log n/\log n)$ 
packed words that store
for each cell $C$ on the boundary of $B_i$ an identifier of the
connected component of $C$ in $G_{\log n}$. Once this information is 
available, the (pointer-based) connectivity list for the cluster 
boundary can be extracted in $O(\tau + \rho)$
time, using appropriate precomputed operations on packed words,
see Figure~\ref{fig:weak_cluster} .

We begin with the first phase. The \emph{upper boundary}
of an elementary box are the $\tau$ cells in the topmost
row of the box, the \emph{lower boundary} are the $\tau$
cells in the bottommost row. From the preprocessing phase,
we have a pointer to the packed connectivity lists for all 
elementary boxes $B_1, \dots, B_{\log n}$.
We make local copies of these packed connectivity lists, and we 
modify them so that the identifiers of all cells and connected
components are unique. For example, we can
increase all identifiers in the connectivity list for $B_i$
by $i \log^3 n$. Using lookup-tables for these word-operations, 
this step can be carried out in $O((\tau + \rho)\log\log n)$ time for
the whole cluster.

The sequence $W_1$ for $B_1$ is exactly the packed connectivity list
of $G_1$. Now, suppose that we have computed for $B_i$ a sequence
$W_i$ of packed words that represent the connected component in $G_i$ 
for each boundary cell of $B_i$. We need to compute a 
similar sequence $W_{i+1}$ for $B_{i+1}$. 
For this, we need to determine how the doors
between $B_i$ and $B_{i+1}$ affect the connected components
in $G_{B_{i+1}}$.

Let $H = (V_H, E_H)$ be the graph that has one vertex for each 
connected component of $G_i$ that contains a cell at the upper 
boundary of $B_i$ and one vertex for each connected component of 
$G_{B_{i+1}}$ that contains a cell at the lower boundary of $B_{i+1}$. 
The vertices of $H$ are labeled by the identifiers of their corresponding
components.  Suppose that $v \in V_H$ represents a component $D_v$ in
$G_{i}$ and $w \in V_H$ represents a component $D_w$ in $G_{B_{i+1}}$. 
There is an edge between $v$ and $w$ in $E_H$ if and only if
$D_v$ contains a cell on the upper boundary of $B_{i}$ and
$D_w$ contains a cell on the lower boundary of $B_{i+1}$ such
that these two cells share an open door.
The graph $H$ has $O(\tau)$ vertices and $O(\tau)$ edges.
Our goal is to obtain the connected components of $H$. From this,
we will then derive the next sequence $W_{i+1}$.
To obtain the components of $H$, we adapt the parallel connectivity 
algorithm of Hirschberg, Chandra, and 
Sarwate~\cite{HirschbergChSa79}.

\begin{claim}\label{clm:packed_components}
We can determine the connected components of $H$ in time
$O((\tau/\log n) \log\log^4)$ time.
\end{claim}

\begin{proof}
First, we obtain a list of $O((\tau/\log n)\/\log\log n)$ packed words
that represent the vertices of $H$, as follows: we 
extract from $W_i$ and from the connectivity list of $B_{i+1}$
the entries for the upper boundary of $B_i$ and for the lower boundary
of $B_{i+1}$. If all these entries are stored together in the
respective packed lists, this takes $O((\tau/\log n)\log\log n)$ time to
copy the relevant data words. Then, we sort these packed sequences
according to the component identifier, in time 
$O(\tau/\log n)\log\log^2 n)$ (Claim~\ref{clm:packed_sort}).
Finally, we go over the sorted packed sequence to extract
a sorted lists of the distinct component identifiers.
This takes $O((\tau/\log n)\log\log n)$ time, using an appropriate
operation on packed words.

The edges of $E_H$ can also be represented by a sequence of 
$O((\tau/\log n)\log\log n)$ words. As explained above, we have
from the preprocessing phase for each elementary box a sequence
of $O((\tau/\log n)\log\log n)$ words
whose entries indicate the open doors along its upper boundary.
From this, we can obtain in $O((\tau/\log n)\log\log n)$ time a
sequence of $O((\tau/\log n))\log\log n)$ words whose entries represent 
the edges of $H$, where each entry stores the component identifiers
of the two endpoints of the edge.
We are now ready to implement the algorithm of Hirschberg, Chandra,
and Sarwate. The main steps of the algorithm are as follows,
see Figure~\ref{fig:hcs_connect}.
\begin{figure}
\centering
\includegraphics{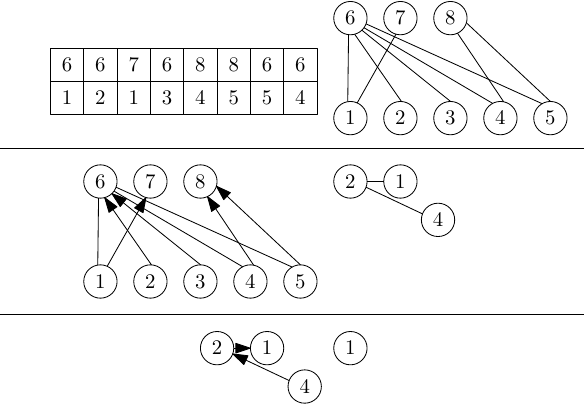}
\caption{The algorithm of Hirschberg, Chandra, and Sarwate.
In each step, we either connect all nodes to their largest
neighbor or to their smallest neighbor.
Then, we contract the resulting rooted trees and label the
contracted nodes with the index of the smallest node.}
\label{fig:hcs_connect}
\end{figure}

\noindent
\textbf{Step 1}: Find for each vertex of $H$ the neighbor with
the smallest and with the largest identifier.
This can be done in $O((\tau/\log n)\log\log^2 n)$ time
by sorting the edge lists twice, once in lexicographic order and
once in reverse lexicographic order of the identifiers of the endpoints.
From these sorted lists, we can extract the desired information in the
claimed time, using appropriate word operations.

\noindent
\textbf{Step 2}: Let $V_H'$ be the vertices
of $H$ with at least one neighbor, and let $V_H'' \subseteq V_H'$ be the
vertices $v \in V_H'$ having
a neighbor with a smaller identifier than
the identifier of $v$. If $|V_H''| \geq |V_H'|/2$, we set the
\emph{successor} of each $v \in V_H''$ to the neighbor of $v$
with the smallest identifier. Otherwise, at least
half of the nodes in $V_H'$ have a neighbor with a larger identifier.
In this case, we let $V_H'''$ be the set of these nodes, and 
we set the
\emph{successor} of each $v \in V_H'''$ to the neighbor of $v$
with the largest identifier. The successor relation defines a directed
forest $F$ on $V_H$ such that at least half of the vertices
in $V_H'$ are not a root in $F$. Given the information available
from Step 1 and appropriate word operations, this step can 
carried out in $O((\tau/\log n)\log\log n)$ time.

\noindent
\textbf{Step 3}: Use \emph{pointer jumping} to determine for each
vertex $v \in V_H$ the identifier of the root of the tree in  $F$ 
that contains $v$. For this, we set the successor of each $v \in V_H$
that does not yet have a successor to $v$ itself. Then, for 
$\log |V_H| = O(\log\log n)$ rounds, we set simultaneously for 
each $v \in V_H$ the new successor of $v$ to the old successor of the 
old successor of $v$ (pointer jumping). 
Each step at least halves the distance of $v$ to its root in $F$, so
it takes $O(\log |V_H|)$ rounds until each vertex in $V_H$
has found the root of its tree in $F$. Each round can be implemented
in $O((\tau/\log n)\log\log^2 n)$ time by sorting the vertices according
to their successors.
Thus, this step takes $O((\tau/\log n)\log\log^3 n)$ time in total.

\noindent
\textbf{Step 4}: Contract each tree of $F$ into a single vertex
whose identifier is the smallest identifier in the tree..
Maintain for the original vertices of $H$ a list that gives the
identifier of the contracted node that represents it.
Again, this step can be carried out in $O(\tau/\log n)\log\log^2 n)$ time
using sorting.

After Steps~1--4, the number of non-singleton components
in $H$ has at least halved. Thus,
by repeating the steps $O(\log |V_H|) = O(\log\log n)$ times,
we can identify the connected components of $H$.
The total time of the algorithm is 
$O((\tau/\log n)\log\log^4 n)$, as claimed.
\end{proof}

Given the connected components of $H$, we can find the
desired sequence $W_{i+1}$ for the boundary $B_{i+1}$.
Indeed, the procedure from Claim~\ref{clm:packed_components}
outputs a sequence of $O((\tau/\log n)\log\log n)$ packed words
that gives for each vertex in $H$ an identifier of the component
in $H$ that contains it. We can use this list as a lookup table
to update the identifiers of the components in the connectivity list
of $B_{i+1}$. This takes $O((\tau+\rho)/\log n)\log\log^2 n)$ time,
using sorting.

In summary, since we consider $\log n$ elementary boxes,
the total time for the first phase if $O((\tau + \rho)\log\log^4 n)$.
The second phase is much easier. For $i = \log_n, \dots, 2$, we
propagate the connectivity information from $B_{i+1}$ to $B_i$.
For this, we need
to update the identifiers of the connected components for the cells on the
upper boundary of $B_i$ using the identifiers of the connected components
on the lower boundary of $B_{i+1}$, and then adjust the connectivity
list $B_{i+1}$ with these new indices. Again, 
this takes $O((\tau+\rho)/\log n)\log\log^2 n)$ time, using sorting.
\end{proof}

\paragraph{Computing the Weak \frechet Distance}
To actually compute the weak \frechet distance, we use
a simplified version of the procedure from Section~\ref{sec:compute}.
In particular, for the weak \frechet distance, there are
only critical values of the type 
vertex-vertex and vertex-edge, i.e., there are only 
$O(n^2)$ critical values. However, we aim
for a subquadratic running time, so we need to perform the 
sampling procedure in a slightly different way.

\begin{theorem}
Suppose we can answer the decision problem for the 
weak \frechet distance in time $T(n)$, for input curves
$P$ and $Q$ with $n$ edges each. Then, we can compute the
weak \frechet distance of $P$ and $Q$ in expected time
$O(n^{3/2}\log^c n + T(n)\log n)$,
for some fixed constant $c > 0$.
\end{theorem}
\begin{proof}
First, we sample a set $S$ of $K = 6n^{1/2}$ critical values uniformly
at random. Then, we find $a, b$ such that the weak \frechet distance
lies between $a$ and $b$ and such that the interval $[a, b]$ contains 
no other element from $S$. This takes $O(K + T(n) \log n)$ time, using 
median finding. 

Similarly to Har-Peled and 
Raichel~\cite[Lemma~6.2]{HarPeledRa14},
we see that the probability, that the interval 
$[a,b]$ has more than $2\gamma n^{3/2} \ln n$ critical values
is at most $1/n^\gamma$. 
Indeed, there are at most $n^2$ vertex-vertex and at most
$2n^2$ vertex-edge events. Thus, the total number of critical
values is at most $L \leq 3n^2$. Let 
$U^+$ be the next $\gamma n^{3/2}\ln n$ larger critical values
after $d_\text{wF}(P, Q)$.
Then, the probability that $S$ contains no value from $U^+$ is
\[
\frac{\binom{L - |U^+|}{K}}{\binom{L}{K}}
=
\prod_{i = 0}^{K-1} \left(1 - \frac{|U^+|}{L - i}\right)
\leq 
\left(1 - \frac{|U^+|}{L}\right)^K
\leq \exp\left(- \frac{|U^+| K}{L}\right)
\leq \exp(-2\gamma\ln n) \leq \frac{1}{2n^\gamma}.
\]
Analogously, the probability that $S$ contains none of the
next $\gamma n^{3/2}\ln n$ smaller critical values is also
at most $1/2n^\gamma$, so the claim follows.
For $\gamma > 0$ large enough, the contribution of
this event to the expected running time is negligible.

Next, we find all critical values in the
interval $[a, b]$. For this, we must determine all vertex-vertex and
vertex-edge pairs with distance in $[a,b]$. 
We report for every vertex $p$ of $P$ or $Q$ the set of
vertices of the other curve that lie in the annulus with radii $a$ 
and $b$ around $p$. Furthermore, we report for every edge
$e$ of $P$ or $Q$ the set of vertices of the other curve that lie in 
the stadium with radii $a$ and $b$.
This can be done efficiently with a range-searching 
structure for semi-algebraic sets. 
Agarwal, Matou\v{s}ek and Sharir~\cite{AgarwalMaSh13} 
show that we can preprocess
the vertices of $P$ and the vertices of $Q$ into a data structure
that can answer our desired range reporting queries in time
$O(n^{1/2}\log^c n + k)$, where $k$ is the output size and $c > 0$ is
some fixed constant. The expected preprocessing time is 
$O(n^{1+\eps})$, where $\eps > 0$ can be made arbitrarily small.
We perform $O(n)$ queries, and the total expected output
size of our queries is $O(n^{3/2}\ln n)$, so it takes
expected time $O(n^{3/2}\log^{c} n)$ to find the critical values in
$[a,b]$. Finally, we perform a binary search on theses critical
values to compute the weak \frechet distance. This takes $O(T(n)\log n)$
time. 
\end{proof}

The following theorem summarizes our results on the weak
\frechet distance.
\begin{theorem}
The weak \frechet distance of two polygonal 
curves, each with $n$ edges, can be 
computed by a randomized algorithm in
time $O(n^2 \alpha(n)\log\log n)$
on a pointer machine and in time 
$O((n^2/\log n)(\log\log n)^5)$ on a word RAM.
\end{theorem}

\section{Conclusion}

We have broken the long-standing quadratic 
upper bound for the decision version of the 
\frechet problem.  Moreover, we have shown 
that this problem has an algebraic decision 
tree of depth $O(n^{2-\eps})$, for some 
$\eps > 0$ and where $n$ is the number of 
vertices of the polygonal curves. We have shown 
how our faster algorithm for the decision version 
can be used for a faster algorithm to compute 
the \frechet distance. If we allow constant-time 
table-lookup, we obtain a running time in close 
reach of $O(n^2)$.  
This leaves us with intriguing open research 
questions.  Can we devise a quadratic or even
a slightly subquadratic algorithm for the 
optimization version? Can we devise such an
algorithm on the word RAM, that is, with constant-time 
table-lookup? What can be said about approximation algorithms?

\section*{Acknowledgments}
We would like to thank Natan Rubin for 
pointing out to us that \frechet-related 
papers require a witty title involving a dog, 
Bettina Speckmann for inspiring discussions 
during the early stages of this research, 
G\"unter Rote for providing us with his Ipelet for
the free-space diagram, and Otfried Cheong for Ipe.
We would also like to thank anonymous referees for 
suggesting simpler proofs of Lemma~\ref{lem:preprocess_input} 
and Lemma~\ref{lem:opt}, for
pointing out~\cite{katz1997expander} for 
Theorem~\ref{thm:actDiscrete}, and for 
suggesting a section on the weak \frechet distance. 

\newcommand{\SortNoop}[1]{}

\appendix

\section{Computational Models}
\label{app:compmodels}

\paragraph{Real RAM}
The standard machine model in computational geometry
is the \emph{real RAM}~\cite{PreparataSh85}. Here, data is represented
as a (countably) infinite sequence of storage cells. These cells
can be of two different types: they can store real numbers
or integers. The model supports standard operations on
these numbers in constant time, including addition,
multiplication, and elementary functions
like square-root, sine or cosine. Furthermore,
the integers can be used as indices to memory locations.
Integers can be converted to real numbers in constant time,
but we need to be careful about the reverse direction.
The \emph{floor} function can be used to truncate a
real number to an integer, but if we were allowed to
use it arbitrarily, the real RAM could solve PSPACE-complete
problems in polynomial time~\cite{Schoenhage79}.
Therefore, we usually have only a restricted floor function
at our disposal.

\paragraph{Word RAM}
The \emph{word RAM} is essentially a real RAM without
support for real numbers. However, on a real RAM,
the integers are usually treated as atomic, whereas
the word RAM allows for powerful bit-manipulation 
tricks~\cite{FredmanWi93}.
More precisely, the word RAM represents the
data as a sequence of
$w$-bit words, where $w = \Omega(\log n)$.
Data can be accessed arbitrarily, and standard
operations, such as Boolean operations
(\texttt{and}, \texttt{xor}, \texttt{shl}, $\ldots$), addition, or
multiplication take constant time. There are many variants of
the word RAM, depending on precisely which instructions are
supported in constant time~\cite{FredmanWi93,BuchinMu11}. 
The general consensus seems
to be that any function in $\text{AC}^0$
is acceptable.\footnote{$\text{AC}^0$ is the
class of all functions $f\colon \{0,1\}^* \rightarrow \{0,1\}^*$ that
can be computed by a family of circuits $(C_n)_{n \in \N}$ with the
following properties~\cite{AroraBa09}: (i) each $C_n$ has $n$ inputs; 
(ii) there exist constants
$a,b$, such that $C_n$ has at most $an^b$ gates, for $n\in \N$;
(iii) there is a constant $d$ such that for all $n$ the length 
of the longest
path from an input to an output in $C_n$ is at most $d$ (i.e., the
circuit family has bounded depth); (iv) each gate
has an arbitrary number of incoming edges (i.e., the fan-in is unbounded).}
However, it is always preferable to rely on a set of operations
as small, and as non-exotic, as possible.
Note that multiplication is not in $\text{AC}^0$~\cite{FurstSaSi84},
but nevertheless is often
included in the word RAM instruction set~\cite{FredmanWi93}.

\paragraph{Pointer machine}
The  \emph{pointer machine} model disallows the use of constant
time table lookup, and is therefore
a restriction of the (real) RAM model~\cite{Schoenhage80,Tarjan75}.
The data structure is modeled as a directed
graph $G$ with bounded out-degree. Each node in
$G$ represents a \emph{record}, with a bounded
number of pointers to other records and a bounded number
of (real or integer) data items.
The algorithm can access data only by following pointers
from the inputs  (and a bounded number of global entry
records); random access is not possible. The data can be manipulated
through the usual real RAM operations,
but without support for the floor function, for
reasons mentioned above.

\paragraph{Algebraic computation tree} \emph{Algebraic computation trees}
(ACTs)~\cite{AroraBa09} are the computational geometry analogue of binary decision trees,
and like these they are mainly used for proving lower bounds. Let
$x_1, \ldots, x_n \in \R$ be the inputs. An ACT is a binary tree
with two kinds of nodes: \emph{computation} and
\emph{branch} nodes. A computation node $v$ has one child and
is labeled with an
expression of the type $y_v = y_u \oplus y_w$, where
$\oplus \in \{+,-,*,/,\sqrt{\cdot}\}$ is a operator and
$y_u, y_w$ is either an input variable $x_1, \ldots, x_n$ or
corresponds to a computation node that is an ancestor of $v$.
A branch node has degree 2 and is labeled by $y_u = 0$ or
$y_u > 0$, where again $y_u$  is either an input or a variable
for an ancestor. A family of algebraic computation trees
$(T_n)_{n \in \N}$
solves a computational problem (like Delaunay triangulation or
convex hull computation), if for each $n \in \N$, the tree $T_n$
accepts inputs of size $n$, and if for any such input $x_1, \ldots, x_n$
the corresponding path in  $T_n$ (where the children of the branch
nodes are determined according the conditions they represent)
constitutes a computation that represents the answer in the
variables $y_v$ encountered during the path.

Algebraic \emph{decision} trees
are defined as follows:
we allow only branch nodes. Each branch node is
labeled with a predicate of the form
$p(x_1, \dots, x_n) = 0$ or $p(x_1, \dots, x_n) > 0$.
The leaves are labeled \emph{yes} or \emph{no}.
Fix some $r \in \{1, \dots, n\}$.
If $p$ is restricted to be
of the form $p(x_1, \dots, x_n) = \sum_{i = 1}^n a_i x_i -b$,
with at most $r$ coefficients $a_i \neq 0$,
we call the decision tree \emph{$r$-linear}.
Erickson~\cite{Erickson99} showed that
any $3$-linear  decision tree for  3SUM
has depth $\Omega(n^2)$. However, 
Gr\o{}nlund and Pettie showed that
there is a $4$-linear decision tree
of depth $O(n^{3/2} \sqrt{\log n})$
for the problem.
In geometric problems, linear predicates
are often much too restrictive.
For example, there is
no $r$-linear decision tree for the \frechet
problem, no matter the choice of $r$: with $r$-linear
decision trees, we cannot even decide whether two
given points $p$ and $q$ have Euclidean distance
at most $1$.

\end{document}